\renewcommand{\paragraph}[1]{\medskip\noindent\textbf{#1}}
\newif\ifMdBtheorems
    \newtheorem{defin}{Definition}
        \newenvironment{definition}{\begin{defin} \sl}{\end{defin}}
    \newtheorem{theo}[defin]{Theorem}
        \newenvironment{theorem}{\begin{theo} \sl}{\end{theo}}
    \newtheorem{lem}[defin]{Lemma}
        \newenvironment{lemma}{\begin{lem} \sl}{\end{lem}}
    \newtheorem{propo}[defin]{Proposition}
    \newtheorem{coro}[defin]{Corollary}
        \newenvironment{corollary}{\begin{coro} \sl}{\end{coro}}
    \newtheorem{obse}[defin]{Observation}
        \newenvironment{observation}{\begin{obse} \sl}{\end{obse}}
    \newtheorem{rem}[defin]{Remark}
        \newenvironment{remark}{\begin{rem} \rm}{\end{rem}}
    \newtheorem{myfact}[defin]{Fact}
    \theoremstyle{plain}
    \newtheorem{theorem}{Theorem}
    \newtheorem{lemma}[theorem]{Lemma}
    \newtheorem{corollary}[theorem]{Corollary}
    \theoremstyle{definition}
    \newtheorem{definition}[theorem]{Definition}
    \theoremstyle{remark}
    \newtheorem{remark}[theorem]{Remark}
\newcommand{\myclaim}[2]
{\begingroup\addtolength\leftmargini{-3mm}\begin{quotation} \noindent {\emph{Claim.} }{#1} \\[2mm]
{\emph{Proof of claim.}} #2 \hfill {\footnotesize $\Box$} \end{quotation}\endgroup}
\newcommand{\myclaimnoqed}[2]
{\begingroup\addtolength\leftmargini{-3mm}\begin{quotation} \noindent {\emph{Claim.} }{#1} \\[2mm]
{\emph{Proof of claim.}} #2 \end{quotation}\endgroup}
\definecolor{ablue}{rgb}{0.3,0.4,0.8}
\definecolor{ared}{rgb}{0.95,0.4,0.4}
\definecolor{agreen}{rgb}{0,0.5,0.25}
\definecolor{ayellow}{rgb}{0.95,0.85,0.3}
\DeclarePairedDelimiter{\oiv}{(}{)}
\renewcommand{\P}{\mathbf{Pr}}
\newcommand{\Reals}{\mathbb{R}}
\newcommand{\cA}{\mathcal{A}}
\newcommand{\cH}{\mathcal{H}}
\newcommand{\cC}{\mathcal{C}}
\newcommand{\cP}{\mathcal{P}}
\newcommand{\cG}{\mathcal{G}}
\newcommand{\cR}{\mathcal{R}}
\newcommand{\cM}{\mathcal{M}}
\newcommand{\cS}{\mathcal{S}}
\newcommand{\sig}{\sigma}
\newcommand{\eps}{\varepsilon}
\newcommand{\eqdef}{\mathrel{\overset{\makebox[0pt]{\mbox{\normalfont\tiny\sffamily def}}}{=}}}
\newcommand{\sd}{\triangle}
\DeclareMathOperator{\poly}{poly}
\DeclareMathOperator{\rdist}{rdist}
\DeclareMathOperator{\opt}{opt}
\DeclareMathOperator{\size}{size}
\DeclareMathOperator{\weight}{weight}
\newcommand{\myin}{\mathrm{in}}
\newcommand{\myout}{\mathrm{out}}
\newcommand{\sigin}{\sig_\myin}
\newcommand{\sigout}{\sig_\myout}
\newcommand{\Pin}{P_\myin}
\newcommand{\Pout}{P_\myout}
\newcommand{\Bin}{B_\myin}
\newcommand{\Bout}{B_\myout}
\newcommand{\Min}{M_\myin}
\newcommand{\Mout}{M_\myout}
\newcommand{\expO}[1]{\exp\!\left(\!O\!\left( #1 \right)\!\right)}
\newcommand{\len}{\mathrm{length}}
\newcommand{\myshort}{\mathrm{short}}
\newcommand{\Sshort}{S_\myshort}
\newcommand{\mymid}{\mathrm{mid}}
\newcommand{\Smid}{S_\mymid}
\newcommand{\mylong}{\mathrm{long}}
\newcommand{\Slong}{S_\mylong}
\newcommand{\mysmall}{\mathrm{small}}
\newcommand{\lsmall}{L_{\mysmall}}
\newcommand{\mymin}{\mathrm{min}}
\newcommand{\imin}{i_{\mymin}}
\newcommand{\mymax}{\mathrm{max}}
\newcommand{\imax}{i_{\mymax}}
\newcommand{\PP}{\mathrm{PP}}
\newcommand{\tspp}{\mbox{{\sc Traveling Salesman Problem}}\xspace}
\newcommand{\TSP}{\mbox{{\sc tsp}}\xspace}
\newcommand{\etsp}{\mbox{{\sc Euclidean TSP}}\xspace}
\newcommand{\mtsp}{\mbox{{\sc Metric TSP}}\xspace}
\newcommand{\btsp}{\mbox{{\sc Euclidean Path Cover}}\xspace}
\newcommand{\NP}{\mbox{{\sc np}}\xspace}
\renewcommand{\P}{\mbox{{\sc p}}\xspace}
\renewcommand{\leq}{\leqslant}
\renewcommand{\geq}{\geqslant}
\newcommand{\etal}{\emph{et~al.}}
\newcommand{\BeginMyItemize}{\begin{itemize}\setlength{\itemsep}{-\parskip}}
\newcommand{\EndMyItemize}{\end{itemize}}
\newcommand{\BeginMyEnumerate}{\begin{enumerate}\setlength{\itemsep}{-\parskip}}
\newcommand{\EndMyEnumerate}{\end{enumerate}}
\newcommand{\defproblem}[3]{
\begin{quotation}
\noindent
  \textsc{#1}  \\
  {\bf{Input:}} #2 \\
  {\bf{Question:}} #3
\end{quotation}
}
\author[1]{Mark de Berg}
\author[2]{Hans L. Bodlaender}
\author[3]{S\'andor Kisfaludi-Bak}
\author[4]{Sudeshna Kolay}
\affil[1]{Department of Mathematics and Computer Science, Eindhoven University of Technology, Eindhoven, The Netherlands\authorcr\texttt{m.t.d.berg@tue.nl}}
\affil[2]{Department of Computer Science, Utrecht University, Utrecht, The Netherlands\authorcr\texttt{h.l.bodlaender@uu.nl}}
\affil[3]{Department of Computer Science, Aalto University, Espoo, Finland\authorcr\texttt{sandor.kisfaludi-bak@aalto.fi}}
\affil[4]{Indian Institute of Technology Kharagpur, Kharagpur, India\authorcr\texttt{
skolay@cse.iitkgp.ac.in}}
\title{An ETH-Tight Exact Algorithm for Euclidean TSP\thanks{This work was supported by the NETWORKS project, funded by the Netherlands Organization for Scientific Research NWO under project no. 024.002.003.}}
\begin{document}
\DeclareRobustCommand{\BIBSORT}[2]{#2} 
\nocite{doktori,BergBKMZ20} 

\maketitle

\begin{abstract}
We study exact algorithms for \etsp in $\Reals^d$. In the early 1990s algorithms with
$n^{O(\sqrt{n})}$ running time were presented for the planar case, and some years later
an algorithm with $n^{O(n^{1-1/d})}$ running time was presented for any $d\geq 2$.
Despite significant interest in subexponential exact algorithms over the past decade,
there has been no progress on \etsp, except for a lower
bound stating that the problem admits no $2^{o(n^{1-1/d})}$ algorithm unless
ETH fails. In this paper we settle the complexity of \etsp, up to constant factors in the exponent and under ETH, by giving an algorithm with running time $2^{O(n^{1-1/d})}$.
\end{abstract}

\section{Introduction}
The \tspp, or \TSP for short, is one of the most widely studied problems in all of
computer science. In (the symmetric version of) the problem we are given a complete
undirected graph~$\cG$ with positive edge weights, and the goal is to compute a
minimum-weight cycle visiting every vertex exactly once.
In 1972 the problem was shown to be \NP-hard by Karp~\cite{Karp10}.
A brute-force algorithm for \TSP runs in $O(n!)$, but the celebrated
Held-Karp dynamic-programming algorithm, discovered independently by
Held and Karp~\cite{HeldK61} and Bellman~\cite{Bellman-TSP}, runs in $O(2^n n^2)$ time.
Despite extensive efforts and progress on special cases,
it is still open if  an exact algorithm for
\TSP exists with running time $O((2-\eps)^n \poly(n))$.

In this paper we study the Euclidean version of \TSP, where the input is a set
$P$ of $n$ points in~$\Reals^d$ and the goal is to find a tour of minimum
Euclidean length visiting all the points. \etsp has been studied extensively
and it can be considered one of the most important geometric optimization problems.
Already in the mid-1970s, \etsp was shown to be \NP-hard~\cite{GareyGJ76,Papadimitriou77}.
Nevertheless, its computational complexity is markedly different from that of the
general \TSP problem. For instance, \etsp admits efficient approximation algorithms.
Indeed, the famous algorithm by Christofides~\cite{Chr76}---which actually works
for the more general \mtsp problem---provides a $(3/2)$-approximation
in polynomial time, while no polynomial-time approximation algorithm exists
for the general problem (unless \P=\NP). It was a long-standing open problem
whether \etsp admits a PTAS. The question was answered affirmatively by
Arora~\cite{Arora98} who provided a PTAS with running time~$n (\log n)^{O(\sqrt{d}/\eps)^{d-1}}$. Independently, Mitchell~\cite{Mitchell99} designed a PTAS in $\mathbb{R}^2$. The running time was
improved to $2^{(1/\eps)^{O(d)}} n + (1/\eps)^{O(d)} n\log n$ by Rao and Smith~~\cite{RaoS98},
\added{and then to $2^{(1/\eps)^{O(d)}}n$ by
Bartal and Gottlieb~\cite{DBLP:conf/focs/BartalG13}.
Kisfaludi-Bak~\etal~\cite{DBLP:conf/focs/Kisfaludi-BakNW21} recently developed a PTAS
with running time $2^{O(1/\eps^{d-1})} n\log n$, and they showed that the dependency
on $\eps$ is optimal under Gap-ETH.}
Hence, the computational complexity of approximating \etsp is well-understood.

Results on exact algorithms for \etsp---such algorithms are the topic of our paper---are also
quite different from those on the general problem. The best known algorithm for the general
case runs, as already remarked, in exponential time, and there is no $2^{o(n)}$ algorithm under the Exponential Time Hypothesis
(ETH)~\cite{ImpagliazzoP01} due to classical reductions for {\sc Hamiltonian Cycle}~\cite[Theorem 14.6]{CyganFKLMPPS15}.
\etsp, on the other hand, is solvable in subexponential time.
For the planar case this has been shown in the early 1990s by
Kann~\cite{Kann92} and independently by  Hwang, Chang and Lee~\cite{HwangCL93}, who presented an algorithm with an $n^{O(\sqrt{n})}$ running time.
Both algorithms use a divide-and-conquer approach that relies on finding a suitable
separator. The approach taken by Hwang, Chang and Lee
is based on considering a triangulation of the point set
such that all segments of the tour appear in the triangulation, and then observing that the
resulting planar graph has a separator of size $O(\sqrt{n})$.
Such a separator can be guessed in $n^{O(\sqrt{n})}$ ways, leading to a recursive algorithm with
 $n^{O(\sqrt{n})}$ running time. It seems hard to extend this
approach to higher dimensions. Kann obtains his separator in a more geometric
way, using the fact that in an optimal tour, there cannot be too many long edges
that are relatively close together---see the Packing Property we formulate
in Section~\ref{sec:bal-sep}. This makes it possible to compute a separator
that is crossed by $O(\sqrt{n})$ edges of an optimal tour, which can
be guessed in $n^{O(\sqrt{n})}$ ways. The geometric flavor of this algorithm
makes it more amenable to extensions to higher dimensions. Indeed,
some years later Smith and Wormald~\cite{SmithW98} gave an algorithm
for \etsp in $\Reals^d$, which is based on a similar kind of geometric
separator as used by Kann. Their algorithm runs in $n^{O(n^{1-1/d})}$ time. (Here and in the sequel we consider the dimension~$d$ to be an absolute constant.)

The main question, also posed by Woeginger in his survey~\cite{Woeginger08}
on open problems around exact algorithms, is the following: is an exact algorithm with running time $2^{O(n^{1-1/d})}$ attainable for \etsp? Similar results have been
obtained for some related problems. In particular, Deineko~\etal~\cite{DeinekoKW06}
proved that Hamiltonian Cycle on planar graphs can be solved in $2^{O(\sqrt{n})}$
time, and Dorn~\etal~\cite{DornPBF10}
proved that \TSP on weighted planar graphs can be solved in $2^{O(\sqrt{n})}$ time.
Marx and Sidiropoulos~\cite{MarxS14} have shown that \etsp does not admit
an algorithm with $2^{O(n^{1-1/d-\eps})}$, unless ETH fails. Recently this conditional lower bound was strengthened to $2^{\Omega(n^{1-1/d})}$ by De Berg~\etal~\cite{BergBKMZ20}.
In the past twenty years the algorithms for \etsp have not been improved, however.
Hence, even for the planar case the complexity of \etsp is still unknown.

\paragraph{Our contribution}
We settle the complexity of \etsp, up to constant factors in the exponent:  we
present an algorithm for \etsp in $\Reals^d$ with running time $2^{O(n^{1-1/d})}$, where $d\geq 2$ is a fixed
constant. \replaced{This is best possible}{Our algorithm is asymptotically optimal}, as De Berg~\etal\cite{BergBKMZ20} show
that there is no $2^{o(n^{1-1/d})}$ algorithm unless ETH fails.

The global approach to obtain the upper bound is similar to the approach of Kann~\cite{Kann92}
and Smith and Wormald~\cite{SmithW98}: we use a divide-and-conquer algorithm
based on a geometric separator. A geometric separator for a given point set~$P$
is a simple geometric object---we use a hypercube---such that the number of points
inside the separator and the number of points outside the separator are roughly balanced.
As mentioned above, Kann~\cite{Kann92} and Smith and Wormald~\cite{SmithW98} use
a Packing Property of the edges in an optimal \TSP tour to argue that a separator
exists that is crossed by only $O(n^{1-1/d})$ edges from the tour. Since $P$
defines $\binom{n}{2}$ possible edges, the set of crossing edges can be guessed
in $n^{O(n^{1-1/d})}$ ways.

The first obstacle we must overcome if we want to
beat this running time is therefore that the number of subproblems is already too
large at the first step of the recursive algorithm. Unfortunately there is no
hope of obtaining a balanced separator that is crossed by $o(n^{1-1/d})$
edges from the tour: there are point sets such that
any balanced separator that has a ``simple'' shape (e.g., ball or hypercube) is crossed $\Omega(n^{1-1/d})$ times by an optimal
tour. Thus we proceed differently: we prove that there exists a separator
such that, even though it can be crossed by up to $\Theta(n^{1-1/d})$ edges
from an optimal tour, the total number of candidate subsets of crossing edges
we need to consider is only $2^{O(n^{1-1/d})}$. We obtain such a separator
in two steps. First we prove a \emph{distance-based separator theorem}
for point sets. Intuitively, this theorem states that any
point set~$P$ admits a balanced separator such that the number
of points from~$P$ within a certain distance from the separator decreases rapidly
as the distance decreases. \added{This is useful because an optimal tour
cannot have many ``long'' edges crossing the separator due to the Packing Property.
Hence, limiting the number of points ``close'' to the separator also limits
the number of combinations of edges that form a candidate subset.}
In the second step we then prove that this
separator~$\sig$ \added{indeed} has the required properties, namely
(i) $\sig$ is crossed by $O(n^{1-1/d})$ edges in an optimal tour, and
(ii) the number of candidate sets of crossing edges is $2^{O(n^{1-1/d})}$.
In order to prove these properties we use the Packing Property of the
edges in an optimal tour.

There is one other obstacle we need to overcome to obtain a $2^{O(n^{1-1/d})}$
algorithm: after computing a suitable separator~$\sig$ and guessing a set~$S$ of
crossing edges, we still need to solve many different subproblems.
The reason is that the partial solutions on either side of~$\sig$ need
to fit together into a tour on the whole point set. Thus a
partial solution on the outside of~$\sig$ imposes connectivity constraints
on the inside. More precisely, if $B$ is the set of endpoints of the
edges in~$S$ that lie inside $\sig$, then the subproblem we face
inside $\sig$ is as follows: compute a set of paths visiting the points
inside~$\sig$ such that the paths realize a given matching on~$B$.
The number of matchings on $|B|$ boundary points is $|B|^{\Theta(|B|)}$,
which is again too much for our purposes.
Fortunately, the rank-based approach~\cite{single-exponential,CyganKN18}
developed in recent years can be applied here. By applying this
approach in a suitable manner, we then obtain our $2^{O(n^{1-1/d})}$
algorithm.

\paragraph{A word on the model of computation}
In this paper we are mainly interested in the combinatorial complexity of \etsp.
The algorithm we describe in Sections~\ref{sec:bal-sep} and \ref{sec:alg}  therefore works in the
real-RAM model of computation, with the capability of taking square roots.
In particular, we assume that distances can be added in $O(1)$ time, so that the length
of a given tour can be computed exactly in $O(n)$ time.
In Section~\ref{sec:packing} we also consider the following ``almost Euclidean'' version of
the problem: we are given a set $P=\{p_1,\ldots,p_n\}$ with rational coordinates,
together with a distance matrix~$D$ such that $D[i,j]$ contains an approximation
of $|p_i p_j|$. The property we require is that the ordering of distances is
preserved: if $|p_i p_j| < |p_k p_l|$ then $D[i,j] < D[k,l]$.
We show that an optimal tour in this setting satisfies the Packing Property,
which implies that our algorithm can solve the almost Euclidean version
of \etsp in $2^{O(n^{1-1/d})}$ time.

\section{A separator theorem for TSP}\label{sec:bal-sep}
In this section we show how to obtain a separator that can be used as the basis
of an efficient recursive algorithm to compute an optimal \TSP tour.
Intuitively, we need a separator that is crossed few times by an
optimal solution and such that the number of candidate sets of crossing edges is small.
\replaced{To this end we construct a separator $\sig$ such that there are few points
relatively close to $\sig$. Using the Packing Property we then show
that~$\sig$ has all the desired properties.}{We obtain such a separator in two steps:
first we construct a separator~$\sig$ such that
there are only few points relatively close to~$\sig$, and then we show that this implies
that~$\sig$ has all the desired properties.}

\paragraph{Notation and terminology}
We define a \emph{separator} to be
the boundary of an axis-aligned hypercube. A separator~$\sig$ partitions
$\Reals^d$ into two regions: a region~$\sigin$ consisting of all points in $\Reals^d$ inside
or on~$\sig$, and a region~$\sigout$ consisting of all points in $\Reals^d$ strictly outside~$\sig$.
We define the \emph{size} of a separator~$\sig$ to be its side length, and we denote it
by~$\size(\sig)$. For a separator~$\sig$ and a scaling factor $t\geq 0$, we define
$t\sig$ to be the separator obtained by scaling $\sig$ by a factor $t$ with respect
to its center. More precisely, $t\sig$ is the separator whose center is the same as
the center of~$\sig$ and with $\size(t\sig) = t\cdot \size(\sig)$;
see Fig.~\ref{fig:defs}(i).
\begin{figure}
\begin{center}
\includegraphics{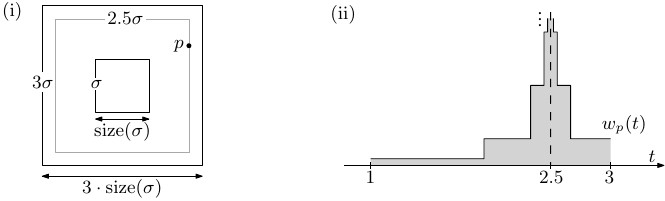}
\end{center}
\caption{(i) A separator $\sig$ and a point $p$ with $\rdist(p,\sig)=0.75$.
          (ii) Schematic drawing of the weight function $w_p(t)$
               of a point $p$ such that $\rdist(p,t\sig^*)=0$ for $t=2.5$.}
\label{fig:defs}
\end{figure}

Let $P$ be a set of $n$ points in $\Reals^d$.
A separator~$\sig$ induces a partition of the given point set~$P$ into two subsets,
$P\cap \sigin$ and $P\cap\sigout$. We are interested in \emph{$\delta$-balanced separators},
which are separators such that $\max(|P\cap \sigin|,|P\cap \sigout|) \leq \delta n$ for a
fixed constant $\delta>0$. It will be convenient to work with
$\delta$-balanced separators for $\delta=4^d/(4^d+1)$. From now on we
will refer to $(4^d/(4^d+1))$-balanced separators simply as \emph{balanced separators}.
(There is nothing special about the constant $4^d/(4^d+1)$, and it could be
made smaller by a more careful reasoning and at the cost of some other constants we will encounter later on.)

\paragraph{Distance-based separators for point sets}
As mentioned, we first construct a separator~$\sig$ such that there are only few points
close to it. To this end we define the \emph{relative distance} from a point~$p$ to~$\sig$, denoted by $\rdist(p,\sig)$, as follows:
\[
\rdist(p,\sig) \eqdef d_{\infty}(p,\sig) / \size(\sig),
\]
where $d_{\infty}(p,\sig)$ denotes the shortest distance in the $\ell_{\infty}$-metric
between $p$ and any point on~$\sig$. Note that if $t$ is the scaling
factor such that $p \in t\sig$, then $\rdist(p,\sig) = |1-t|/2$.
For integers~$i$ define
\[
P_i(\sig) \eqdef \{ \ p\in P \ \mid \ \rdist(p,\sig)  \leq 2^i / n^{1/d} \ \}.
\]
Note that the smaller $i$ is, the closer to $\sig$ the points in $P_i(\sig)$ are required to be.
We now wish to find a separator $\sig$ such that the size of the sets $P_i(\sig)$ decreases rapidly as $i$ decreases.
\begin{theorem}\label{thm:sep-point-distance}
Let $P$ be a set of $n$ points in $\Reals^d$. Then there is a balanced separator~$\sig$ for $P$ such that
\[
|P_i(\sig)| \ = \
\left\{ \begin{array}{ll}
        O((3/2)^i n^{1-1/d}) & \mbox{for all $i < 0$} \\[2mm]
        O(4^i n^{1-1/d}) & \mbox{for all $0 \leq i$}
        \end{array}
\right.
\]
Moreover, such a separator can be found in $O(n^{d+1})$ time.
\end{theorem}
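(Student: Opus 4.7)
I would prove the theorem by a probabilistic-method argument over a family of balanced candidate separators, combined with a union bound over the scales indexed by $i$.

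\textbf{Setting up the family.} I first need a family $\{\sig_t\}_{t \in T}$ of balanced separators parameterised by $t$ in a parameter set $T$ of ``measure'' $\Omega(1)$. A natural first attempt is concentric scalings of a single base hypercube $\sig^*$, but the pigeonhole that would produce a factor-$\Theta(1)$ range of balanced scales can fail when many points lie at nearly the same $\ell_\infty$-distance from the chosen centre (e.g.\ points packed into a thin axis-aligned shell). I would therefore use a two-parameter family of candidates suggested by the figure: enumerate candidate centres on a grid $G_i$ (Fig.~\ref{fig:defs}(iii)), and for each grid point $g$ consider a hypercube $H_g$ of a suitable size, so that averaging happens over the choice of centre as well as over a short scaling range.

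\textbf{Averaging and union bound.} Using the identity $\rdist(p,t\sig^*) = |t - t_p|/(2t)$ from the statement (where $t_p$ is the scaling at which $p$ lies on $\partial(t_p \sig^*)$), the event $p \in P_i(\sig_t)$ becomes $|t - t_p| \leq 2t\cdot 2^i/n^{1/d}$. For $t$ uniform in $T$ this has probability $O(2^i/n^{1/d})$, hence by linearity $\E\!\bigl[|P_i(\sig_t)|\bigr] = O(2^i n^{1-1/d})$. Setting $c_i = (3/2)^i$ for $i < 0$ and $c_i = 4^i$ for $i \geq 0$, Markov gives
\[
\Pr\!\left[\, |P_i(\sig_t)| > C\, c_i\, n^{1-1/d} \,\right] \;=\; O\!\left(\frac{2^i}{C c_i}\right),
\]
and summing over $i$ produces two geometric tails,
\[
\sum_{i \geq 0} 2^{-i} \quad \text{and} \quad \sum_{i < 0} (4/3)^i,
\]
both $O(1)$. (This is also what pins down the exponents $4$ and $3/2$ in the statement --- they are precisely what makes the tails converge.) Choosing $C$ large enough makes the total failure probability less than $1$, so some $t \in T$ yields a separator $\sig := \sig_t$ meeting all claimed bounds simultaneously.

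\textbf{Algorithm and main obstacle.} For the $O(n^{d+1})$ running time I would enumerate the $O(n^{d})$ combinatorially distinct axis-aligned hypercubes (each pinned by $O(d)$ points of $P$) and, per candidate, check balance and compute $|P_i(\sig)|$ for the $O(\log n)$ relevant $i$ in $O(n)$ amortised time via presorted coordinate sweeps. The genuinely delicate part is the family construction: guaranteeing an $\Omega(1)$-measure continuum of balanced candidates in the presence of adversarially placed points. Handling this correctly --- which is what the multiscale grid $G_i$ of Fig.~\ref{fig:defs}(iii) is designed for --- is what I expect to be the main obstacle.
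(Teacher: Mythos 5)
Your averaging core is sound and is essentially the paper's argument in probabilistic clothing: the paper defines a weight $w_p(t)$ that equals $n^{1/d}/(3/2)^{i_p(t)}$ or $n^{1/d}/4^{i_p(t)}$ depending on the sign of $i_p(t)$, shows $\int_1^3 w_p(t)\,\mathrm{d}t = O(1)$ per point (because the set of $t$ with $i_p(t)=i$ has measure $O(2^i/n^{1/d})$, exactly your observation), and picks $t^*$ minimizing the total weight; your Markov-plus-union-bound over scales, with the tails $\sum_{i\geq 0}2^{-i}$ and $\sum_{i<0}(4/3)^i$, is the same computation and correctly explains where the constants $3/2$ and $4$ come from.

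However, there is a genuine gap at exactly the point you flag as the main obstacle, and your proposed fix is the wrong tool. The concentric one-parameter family does work, and your ``thin axis-aligned shell'' worry is unfounded once the base hypercube is chosen correctly: the paper takes $\sig^*$ to be a \emph{smallest} axis-aligned hypercube with $|P\cap\sigin^*|\geq n/(4^d+1)$, and then \emph{every} scaling $t\sig^*$ with $1\leq t\leq 3$ is balanced --- not by a pigeonhole on the radial distribution of points, but deterministically: for $t\geq 1$ the outside only shrinks, so $|P\cap(t\sig^*)_\myout|\leq n - n/(4^d+1)$, and for $t\leq 3$ the region $t\sigin^*$ can be covered by $4^d$ hypercubes of size at most $(3/4)\size(\sig^*)$, each of which contains fewer than $n/(4^d+1)$ points by minimality of $\sig^*$. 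Note that $\sig^*$ is not centred at any point you fix in advance (such as a centroid); its centre and scale are both determined by the minimality condition, which is what defuses the adversarial examples. Your alternative --- a two-parameter family of hypercubes $H_g$ centred on the grid $G_i$ of Fig.~\ref{fig:defs}(iii) --- is not developed (you give no mechanism forcing such candidates to be balanced), and it misreads the figure: that grid is an ingredient of the proof of Theorem~\ref{thm:sep-tsp}, used to count candidate crossing segments, and plays no role in Theorem~\ref{thm:sep-point-distance}. Your $O(n^{d+1})$ algorithmic sketch is plausible but inherits the same unresolved dependence on which family is being searched; the paper instead finds $\sig^*$ by brute force over hypercubes pinned by $d$ points and then optimizes a truncated step function of $t$ in $O(n\log n)$ time.
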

\begin{proof}
Let $\sig^*$ be a smallest separator such
that $|P\cap \sigin^*| \geq n/(4^d+1)$. We will show that there is a $t^*$
with $1\leq t^*\leq 3$ such that $t^* \sig^*$ is a separator with the
required properties.
\medskip

First we claim that $t \sig^*$ is balanced for all $1\leq t\leq 3$. To see
this, observe that for $t\geq 1$ we have
\[
|P\cap (t\sig^*)_\myout|
    \leq |P\cap \sigout^*|
    = n - |P\cap \sigin^*|
    \leq n-n/(4^d+1)
    = (4^d/(4^d+1)) n.
\]

Moreover, for $t\leq 3$ we can cover $t\sigin^*$ by $4^d$ hypercubes of size
$(3/4)\cdot\size(\sig^*)$. By definition of $\sig^*$ these hypercubes
contain less than $n/(4^d+1)$ points each, so
$
|P\cap (t\sig^*)_\myin| < 4^d \cdot (n/(4^d+1)),
$
which finishes the proof of the claim.
\added{(So far the proof is similar to earlier separator constructions~\cite{BergBKMZ20,DBLP:journals/siamcomp/Har-PeledQ17};
the main challenge is to establish the distance properties, which we do next.)}
\medskip

It remains to  prove that there is a $t^*$ with $1\leq t^*\leq 3$ such that
$t^*\sig^*$ satisfies the condition on the sizes of the sets $P_i(t^*\sig^*)$.
To this end we will define a weight function~$w_p:[1,3]\rightarrow \Reals$ for
each $p\in P$. The idea is that the closer $p$ is to $t\sig^*$, the higher the
value~$w_p(t)$. An averaging argument will then show that there must be a
$t^*$ such that $\sum_{p\in P} w_p(t^*)$ is sufficiently small, from which it
follows that $t^*\sig^*$ satisfies the condition on the sizes of the sets
$P_i(t^*\sig^*)$. Next we make this idea precise.

Assume without loss of generality that $\size(\sigma^*)=1$. For a point $p\in
P$, let $i_p(t)$ be the integer such that  $2^{i_p(t)-1}/n^{1/d} <
\rdist(p,t\sig^*) \leq 2^{i_p(t)}/n^{1/d}$, where  $i_p(t)=-\infty$ if
$\rdist(p,t\sig^*)=0$. Note that $p\in P_i(t\sig^*)$ if and only if $i_p(t)
\leq i$. We define the weight function $w_p(t)$ as follows; see
Fig.~\ref{fig:defs}(ii).

\[
w_p(t) \ \eqdef \
\left\{ \begin{array}{ll}
        \frac{n^{1/d}}{(3/2)^{i_p(t)}} & \mbox{if $i_p(t) < 0$ } \\[2mm]
        \frac{n^{1/d}}{4^{i_p(t)}} & \mbox{if $i_p(t) \geq 0$} \\[2mm]
        \mbox{undefined} & \mbox{if $i_p(t) = -\infty$.}\\
       \end{array}
\right.
\]

\added{The above definition has different constants in the denominators ($(3/2)^{i_p(t)}$ versus $4^{i_p(t)}$) for a technical reason: we will soon need that when multiplied by $2^{i_p(t)}$, the cumulative sums are convergent sequences for both $i_p(t)<0$ and $i_p(t)\geq 0$. One could instead use any constant from the open interval $(1,2)$ for the case $i_p(t)<0$, and any constant from $(2,\infty)$ for the case $i_p(t)\geq 0$, respectively.}
We now want to bound $\int_1^3 w_p(t) \mathrm{d}t$. Note that the function
$w_p(t)$ may be undefined for at most one $t\in [1,3]$, namely when there is a
$t$ in this range such that $\rdist(p,t\sig^*)=0$.  Formally we should remove
such a~$t$ from the domain of integration. To avoid cluttering the notation we
ignore this technicality and continue to write $\int_1^3 w_p(t) \mathrm{d}t$.

\myclaimnoqed{
For each $p \in P$, we have $\int_1^3 w_p(t) \mathrm{d}t = O(1)$.
}{
Define  $T_p(i) \eqdef \{ t \mid 1\leq t\leq 3 \mbox{ and } i_p(t) = i \}$. By the
definition of~$w_p(t)$,  the value $w_p(t)$ is constant over $T_p(i)$. We
therefore want to bound $\mathrm{len}(T_p(i))$, the sum of the lengths of the
intervals comprising~$T_p(i)$.  Assume without loss of generality that the
center of $\sig^*$ lies at the origin  of~$\mathbb{R}^d$. Then, depending on
the position of $p$, there is some coordinate $x_i$ such that
$\rdist(p,t\sig^*) = \vert p_{x_i}-t\vert /
\size(t\sigma^*)$.
Assume without loss of generality that $\vert p_{x_1}-t \vert=\size(t\sigma^*)$.
Since $1\leq t\leq 3$ and $\size(\sigma^*)=1$, we then have
$\rdist(p,t\sig^*) \geq \vert p_{x_1}-t\vert/3$.  Hence, for any $t\in T_p(i)$ we
have $\vert p_{x_1}-t\vert/3 \leq 2^i / n^{1/d}$.  This implies that
$\mathrm{len}(T_p(i)) \leq 6 \cdot 2^i/n^{1/d}$ and so

\begin{align*}
\hspace{2cm}\int_1^3 \! w_p(t) \mathrm{d}t
&= \sum\limits_{i\geq 0} \mathrm{len}(T_p(i)) \cdot \frac{n^{1/d}}{4^i} + \sum\limits_{i < 0} \mathrm{len}(T_p(i)) \cdot \frac{n^{1/d}}{(3/2)^i} \\
&\leq 6 \cdot \sum\limits_{i\geq 0} \left(\frac{1}{2}\right)^{\!i} \!+\!  6\cdot \sum\limits_{i < 0} \left(\frac{4}{3}\right)^{\!i}\\
&= 30. \qedhere
\end{align*}
}

The above claim implies that $\int_1^3 \left( \sum_{p\in P} w_p(t) \right) \mathrm{d}t \leq 30n$.
Hence there exists a $t^*\in [1,3]$ such that $\sum_{p\in P} w_p(t^*) \leq 15n$.
Now consider a set $P_i(t^*\sig^*)$ with
$i\geq 0$. Each $p\in P_i(t^*\sig^*)$ has
$i_p(t^*) \leq i$ and so
\[
|P_i(t^*\sig^*)|
    \leq \frac{\sum_{p\in P} w_p(t^*)}{\min_{p\in P_i(t^*\sig^*)} w_p(t^*) }
    = \frac{O(n)}{n^{1/d}/4^i }
    = O(4^i n^{1-1/d}).
\]
A similar argument shows that
$|P_i(t^*\sig^*)| = O((3/2)^i n^{1-1/d})$ for all $i < 0$.

To find the desired separator we first compute~$\sig^*$. Note that we can
always shift $\sig^*$ such that it has at least one point on at least $d$ of
its $(d-1)$-dimensional faces. (Note that an input point on a
lower-dimensional face of $\sig^*$ is counted towards each incident facet.)
Hence, a simple brute-force algorithm can find $\sig^*$ in $O(n^{d+1})$ time.
Once we have $\sig^*$, we would like to find the value~$t^*\in[1,3]$
minimizing $\sum_{p\in P} w_p(t)$. Recall that each~$w_p$ is a step function,
and so $\sum_{p\in P} w_p$ is a step function as well. There is one slight
issue, however, namely that the number of steps of the functions~$w_p$ is
unbounded. We deal with this issue by replacing each~$w_p$ by a truncated
version $\overline{w}_p$, as explained next.

We define the truncated function $\overline{w}_p$ as
follows:

\[
\overline{w}_p(t) \ \eqdef \
\left\{ \begin{array}{ll}
       1/n & \mbox{if $w_p(t) < 1/n$} \\[2mm]
       w_p(t) & \mbox{if $1/n \leq w_p(t) \leq 15 n$} \\[2mm]
       15 n & \mbox{if $w_p(t) > 15 n$. }
      \end{array}
\right.
\]

Each function $\overline{w}_p$ is a step function, and one easily verifies
that $\overline{w}_p$ has $O(\log n)$ steps which we can compute in $O(\log
n)$ time.  Hence, we can find a value $\overline{t}$ that minimizes
$\sum_{p\in P} \overline{w}_p(t)$ in $O(n \log n)$ time, \added{by scanning over
all step functions in parallel and maintaining their sum as we go.} Since $\sum_{p\in P}
w_p(\overline{t})=O(n)$ if $\sum_{p\in P} \overline{w}_p(\overline{t})=O(n)$,
the separator $\overline{t}\sig^*$ has the required properties.
\end{proof}


\begin{remark}
It is not hard to reduce the time needed to compute
the separator by working with an approximation of the smallest
hypercube~$\sig^*$ containing at least $n/(4^d+1)$ points.
We can find an $\eps$-approximation to the minimum enclosing
ball in linear time~\cite{Har-Peled:2015:NPL:2856350.2831230}, whose circumscribed
axis-parallel cube is a constant-approximation to the minimum enclosing cube.
Note that this would weaken the balance factor of the separator theorem.
Nonetheless, in our application this does not make a difference,
and the simple brute-force algorithm to find~$\sig^*$ suffices.
\end{remark}

In the remainder we will need a slightly more general version of
Theorem~\ref{thm:sep-point-distance}, where we require the separator to be balanced
with respect to a given subset $Q\subseteq P$, that is, we require
$\max(|Q\cap \sigin|,|Q\cap \sigout|) \leq \delta |Q|$ for $\delta=4^d/(4^d+1)$.
Note that the distance condition in the corollary below is still with respect to the points
in~$P$. The proof of the corollary is exactly the same as before, we only need
to redefine $\sig^*$ to be a smallest separator such
that $|Q\cap \sigin^*| \geq |Q|/(4^d+1)$.
\begin{corollary}\label{coro:sep-point-distance}
Let $P$ be a set of $n$ points in $\Reals^d$ and let $Q\subseteq P$. Then there is a separator~$\sig$
that is balanced with respect to $Q$ and such that

\[
|P_i(\sig)| \ = \
\left\{ \begin{array}{ll}
        O((3/2)^i n^{1-1/d}) & \mbox{for all $ i < 0$} \\[2mm]
        O(4^i n^{1-1/d}) & \mbox{for all $0 \leq i$.}
        \end{array}
\right.
\]
Moreover, such a separator can be found in $O(n^{d+1})$ time.
\end{corollary}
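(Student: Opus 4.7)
The plan is to mimic the proof of Theorem~\ref{thm:sep-point-distance} essentially verbatim, with a single change at the starting point: instead of taking $\sig^*$ to be a smallest separator with $|P \cap \sigin^*| \geq n/(4^d+1)$, I would take $\sig^*$ to be a smallest axis-aligned hypercube whose interior contains at least $|Q|/(4^d+1)$ points of $Q$. The argument then splits neatly into two independent parts: a balancedness check, which involves only $Q$, and a distance-based bound, which involves only $P$. Because these two parts hardly interact, swapping the role of $P$ and $Q$ in the balancedness part does not affect the other.

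For the balancedness step I would repeat the two computations from the proof of Theorem~\ref{thm:sep-point-distance} with $Q$ in place of $P$. The outside inequality $|Q \cap (t\sig^*)_\myout| \leq |Q| - |Q \cap \sigin^*| \leq (4^d/(4^d+1))|Q|$ holds for every $t \geq 1$. For the inside bound, I would reuse the covering trick: for $t \leq 3$ we can cover $t\sigin^*$ by $4^d$ axis-aligned hypercubes of edge length at most $(3/4)\size(\sig^*)$, and by minimality of $\sig^*$ each contains strictly fewer than $|Q|/(4^d+1)$ points of $Q$. So $t\sig^*$ is balanced with respect to $Q$ for every $t \in [1,3]$.

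For the distance-based bound, the weight functions $w_p$ are defined exactly as before for every $p \in P$, and they depend only on the geometry of $\sig^*$ and on the position of $p$; whether a point happens to lie in the subset $Q$ is irrelevant. Consequently the bound $\int_1^3 w_p(t)\,\mathrm{d}t = O(1)$ still holds for each $p \in P$, giving $\int_1^3 \sum_{p \in P} w_p(t)\,\mathrm{d}t = O(n)$, and averaging still produces a $t^* \in [1,3]$ with $\sum_{p \in P} w_p(t^*) = O(n)$. The Markov-type estimate yielding $|P_i(t^*\sig^*)| = O(4^i n^{1-1/d})$ for $i \geq 0$ and $|P_i(t^*\sig^*)| = O((3/2)^i n^{1-1/d})$ for $i < 0$ is then identical to before.

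The algorithmic part carries over unchanged as well: the separator $\sig^*$ can be found by brute force in $O(n^{d+1})$ time, since it can be shifted so that $d$ of its $(d{-}1)$-dimensional faces are touched by points of $Q \subseteq P$; and the minimiser $\overline{t}$ of the sum of truncated step functions $\overline{w}_p$ is then located in $O(n \log n)$ time. The only thing one really has to verify is that swapping $P$ for $Q$ in the definition of $\sig^*$ breaks neither the outside inequality nor the covering argument, and this is immediate because both rely only on the defining property of $\sig^*$ (being smallest with a fixed fraction of the relevant set inside). Since the distance condition and the balancedness are decoupled by construction, there is no real obstacle to surmount.
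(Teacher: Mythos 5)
Your proposal is correct and is exactly the route the paper takes: the paper's proof of the corollary consists of the single remark that one redefines $\sig^*$ to be a smallest separator with $|Q\cap\sigin^*|\geq |Q|/(4^d+1)$ and then repeats the proof of Theorem~\ref{thm:sep-point-distance} verbatim, which is precisely what you do. Your observation that the balancedness argument depends only on $Q$ while the weight-function argument depends only on $P$, so the two parts decouple, is the (implicit) justification the paper relies on.
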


\paragraph{A separator for TSP}
Let $P$ be a set of $n$ points in~$\Reals^d$, and let $\cS(P)$ be the set of segments
defined by~$P$, that is, $\cS(P) \eqdef \{ pq \mid (p,q)\in P\times P\}$.
Now consider a segment $s\in \cS(P)$ and a separator~$\sig$. We say that
$s$~\emph{crosses}~$\sig$ if one endpoint of $s$ lies in $\sigin$ while
the other lies in~$\sigout$. Using our distance-based separator for points we
want to find a separator that is crossed only a few times by an optimal TSP tour.
Moreover, we want to control the number of ways in which we have to ``guess''
a set of crossing segments. For this we will need the following
crucial property of the segments in an optimal TSP tour.
\begin{definition}\label{def:packing}
A set $S$ of segments in $\Reals^d$ has the \emph{Packing Property} if
for \replaced{every axis-parallel hypercube~$\sigma$}{any separator~$\sig$} \replaced{we have both of the following properties.}{we have}
\begin{itemize}
\item (PP1):
	$\big| \{ s\in S \mid s \mbox{ crosses } \sig \mbox{ and } \len(s) \geq \size(\sig) \}  \big| \!=\! O(1)$ \\[-4mm]
\item (PP2):
	$\big|  \{ s\in S \mid s \subset \sigin \mbox{ and }  \len(s) \geq \size(\sig)/4 \}  \big| \!=\! O(1)$.
\end{itemize}
\end{definition}
Property~(PP2) is actually implied by (PP1), but it will
be convenient to explicitly state (PP2) as part of the definition.
Note that the constants hidden in the $O$-notation in Definition~\ref{def:packing}
may (and do) depend on~$d$.

Some variants of the above Packing Property have been shown to hold for the
set of edges of an optimal tour for \etsp~\cite{Kann92,SmithW98}. For completeness we
present a proof in a more general setting, which can be found in
Section~\ref{sec:packing}.\footnote{\added{In this more general setting, we work with
a distance matrix~$D$ instead of with the exact Euclidean distances. We prove that
the Packing Property holds when the pairwise distances in~$D$ have the same ordering
as the pairwise Euclidean distances, that is, when $D[i,j]<D[k,\ell]$ if and only if $|p_i p_j| < |p_k p_{\ell}|$.}}
Hence, we can restrict our attention to subsets
of~$\cS(P)$ with the Packing Property. For a separator~$\sig$, we are thus
interested in the following collection of sets of segments crossing~$\sig$:
\[
\cC(\sig,P) \eqdef
  \{  S \subseteq \cS(P) \mid \mbox{ $S$ has the Packing Property and all segments in $S$ cross $\sig$} \}.
\]
Our main separator theorem\added{, given as Theorem~\ref{thm:sep-tsp} below,}
states that we can find a separator~$\sig$ that
is balanced and such that the sets in $\cC(\sig,P)$, as well as
the collection $\cC(\sig,P)$ itself, are small.
\replaced{Given a separator~$\sig$, it is hard to generate the set $\cC(\sig,P)$,
because even for a given set~$S$ it is not easy to
check that $S$ has the Packing Property. We will therefore generate
a slightly larger collection of \emph{candidate sets}, which we denote by $\cC'(\sig,P)$.}{Since the general packing
property is hard to test, we will enumerate a slightly
larger collection of \emph{candidate sets}, which we denote by $\cC'(\sig,P)$.}
\index{candidate set}

\begin{theorem}\label{thm:sep-tsp}
Let $P$ be a set of $n$ points in $\Reals^d$ and let $Q\subseteq P$.
Then there is a separator~$\sig$ and collection $\cC'(\sig,P)$ such that
\begin{enumerate}
\item[(i)] $\sig$ is balanced with respect to $Q$
\item[(ii)] each candidate set $S\in \cC'(\sig,P)$ contains $O(n^{1-1/d})$ segments
\item[(iii)] $\cC(\sig,P) \subseteq \cC'(\sig,P)$, and $|\cC(\sig,P)| \leq |\cC'(\sig,P)|= 2^{O(n^{1-1/d})}$.
\end{enumerate}
Moreover, $\sig$ and the collection $\cC'(\sig,P)$ can be computed in $2^{O(n^{1-1/d})}$ time.
\end{theorem}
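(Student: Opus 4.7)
The starting point is Corollary~\ref{coro:sep-point-distance}: take $\sig$ to be the balanced separator it produces, so that (i) holds and the sets $P_i(\sig)$ obey the stated decay. Normalizing $\size(\sig)=1$, I partition $\cS(P)$ into \emph{length classes}: a segment $s$ lies in class $i$ when $2^{i-1}/n^{1/d} < \length(s) \leq 2^i/n^{1/d}$. A class-$i$ segment crossing $\sig$ has both endpoints in $P_i(\sig)$, since the $\ell_\infty$-distance between endpoints is bounded by their Euclidean distance. For each $i$, lay down a grid $G_i$ of axis-aligned cubes of side $2^i/n^{1/d}$ covering an $O(2^i/n^{1/d})$-thick shell around $\sig$ (as hinted in Fig.~\ref{fig:defs}(iii)); the number of cells is $|G_i| = O(2^{-i(d-1)} n^{1-1/d})$. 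By (PP2) applied to each cell (dilated by a constant), any set $S$ with the packing property hosts at most $O(1)$ class-$i$ crossing segments per cell of $G_i$. A further consequence of (PP1), applied to an infinitesimal separator around a vertex, is that every vertex has $O(1)$ degree in $S$.

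For (ii), split into three regimes. For $0 \leq i \leq \tfrac{1}{d}\log n$, the per-cell bound yields $O(|G_i|) = O(2^{-i(d-1)} n^{1-1/d})$ class-$i$ segments, summing to $O(n^{1-1/d})$. For $i > \tfrac{1}{d}\log n$ the segments are longer than $\size(\sig)$, so (PP1) applied directly to $\sig$ gives $O(1)$ segments per class, and only $O(\log n)$ such classes exist. For $i<0$, the $O(1)$-degree consequence bounds class-$i$ crossing segments by $2|P_i(\sig)| = O((3/2)^i n^{1-1/d})$, which sums over $i<0$ to $O(n^{1-1/d})$. Combining, every $S \in \cC'(\sig,P)$ has $O(n^{1-1/d})$ crossing segments.

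For (iii) and the running time, I enumerate class by class, cell by cell. In a cell $c$ of $G_i$ holding $m_c$ and $m'_c$ points of $P_i(\sig)$ on the inside and outside of $\sig$ respectively, the number of size-$O(1)$ subsets of class-$i$ crossing segments it can host is $(m_c m'_c)^{O(1)}$. For $i\geq 0$, taking the product over cells and applying concavity of $\log$ bounds $\sum_c \log(m_c m'_c)$ by $O(|G_i|\log(|P_i(\sig)|/|G_i|)) = O(i \cdot 2^{-i(d-1)} n^{1-1/d})$; summing over $i\geq 0$ yields exponent $O(n^{1-1/d})$. For $i<0$, the number of non-empty cells is at most $|P_i(\sig)|$ and each contributes only a constant factor of choices, giving $2^{O(|P_i(\sig)|)}$; geometric decay makes $\sum_{i<0} |P_i(\sig)| = O(n^{1-1/d})$. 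The $O(\log n)$ long-segment classes each contribute only $n^{O(1)}$ choices. Multiplying, $|\cC'(\sig,P)| = 2^{O(n^{1-1/d})}$, and this bookkeeping simultaneously yields an enumeration algorithm of matching running time.

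The main technical obstacle is calibrating the regime split in (iii): naively the $\Theta(n^{1-1/d})$ crossing edges could be chosen in $n^{\Theta(n^{1-1/d})}$ ways, and saving the $\log n$ factor in the exponent is exactly what the decay rates of $|P_i(\sig)|$ from Corollary~\ref{coro:sep-point-distance} enable---the rates $(3/2)^i$ and $4^i$ are tuned so that both the grid-enumeration for $i\geq 0$ and the non-empty-cell enumeration for $i<0$ telescope to $O(n^{1-1/d})$.
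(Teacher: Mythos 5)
Your proposal is correct and follows essentially the same route as the paper's proof: the separator from Corollary~\ref{coro:sep-point-distance}, dyadic length classes, a grid of $O(2^{-i(d-1)}n^{1-1/d})$ cells per class with (PP2) giving $O(1)$ segments per cell, the concavity/AM--GM bound on $\sum_c \log m_c$ for $i\ge 0$, and the point-count bound for $i<0$. Two small imprecisions are worth fixing but do not affect the argument: for segments longer than $\size(\sig)$ you should not invoke ``$O(\log n)$ classes'' (the diameter of $P$ is unbounded, so this count is unjustified) but simply note that (PP1) applied to $\sig$ bounds the \emph{total} number of long crossing segments by $O(1)$, enumerable in $n^{O(1)}$ ways; and for $i<0$ a single non-empty cell contributes $m_c^{O(1)}$ choices rather than a constant factor, though the stated bound $2^{O(|P_i(\sig)|)}$ still follows since $\sum_c \log m_c \le \sum_c m_c = O(|P_i(\sig)|)$.
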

\begin{proof}
Let $\sig$ be the separator obtained by applying Corollary~\ref{coro:sep-point-distance}
to the sets $P$ and $Q$. Then $\sig$ has property~(i). It remains to show that there is a set $\cC'(\sig,P)$ that has properties (ii) and (iii) and that can be enumerated in $2^{O(n^{1-1/d})}$ time. We assume without loss of generality that $\size(\sig)=1$
and that $\sig$ is centered at the origin.

Let $\lsmall \eqdef 1/(n^{1/d} n^{(1-1/d)\log_{3/2} 2})$.
Any set $S\in \cC(\sig,P)$ can be partitioned into three subsets:
\begin{itemize}
\item $\Sshort \eqdef \{ s\in S \mid \len(s) \leq  \lsmall \}$
\item $\Smid \eqdef \{ s\in S \mid \lsmall < \len(s) \leq 1 \}$
\item $\Slong \eqdef \{ s\in S \mid \len(s) > 1 \}$.
\end{itemize}
We will analyze each of the three subsets separately. This analysis will give an upper bound on the number of candidates for the given type of subset. The argument will be constructive in the sense that it allows us to enumerate the candidates efficiently. By combining the candidates for the three types of subsets, we then get the desired set $\cC'(\sig,P)$. We start by analyzing $\Sshort$ and $\Slong$.

\myclaim{
For any $S\in \cC(\sig,P)$ the set $\Sshort$ consists of $O(1)$ segments, and
the number of different subsets $\Sshort$ that can arise over all sets $S\in
\cC(\sig,P)$ is $n^{O(1)}$. Similarly, $\Slong$ consists of $O(1)$ segments,
and the number of different subsets $\Slong$ that can arise over all sets
$S\in \cC(\sig,P)$ is $n^{O(1)}$.
}{
A segment in~$\Sshort$ has both endpoints at distance at most $\lsmall$
from~$\sig$, and so both endpoints are in $P_i(\sig)$ for $i=-\log_{3/2} n^{1-1/d}$.
By Corollary~\ref{coro:sep-point-distance}, the number of points in
this $P_i(\sig)$ is $O((3/2)^i n^{1-1/d}) = O(1)$. Hence, $|\Sshort|=O(1)$,
which trivially implies we can choose $\Sshort$ in $n^{O(1)}$ ways.
The number of segments in $\Slong$ is $O(1)$ by Packing Property~(PP1),
which again implies that we can choose $\Slong$ in $n^{O(1)}$ ways.
}

\noindent We can also enumerate the candidates for $\Sshort$ and $\Slong$ in polynomial time. It remains to handle~$\Smid$.

\myclaim{
For any $S\in \cC(\sig,P)$ the set $\Smid$ consists of $O(n^{1-1/d})$ segments,
and the number of different subsets $\Smid$ that can arise over all sets $S\in \cC(\sig,P)$ is $2^{O(n^{1-1/d})}$.
}{
Define $\Smid(i)\subseteq \Smid$ to be the following:

\[\Smid(i) \eqdef \left\{ s\in\Smid \mid 2^{i-1}/n^{1/d} < \len(s) \leq 2^{i}/n^{1/d} \right\}.\]
Note that
\[\Smid = \bigcup \{ \Smid(i) \mid -\log_{3/2} n^{1-1/d}+1 \leq i\leq \log n^{1/d} \}.\]
We first analyze $|\Smid(i)|$ and the number of ways in which we
can choose $\Smid(i)$ for a fixed~$i$.
To this end, we partition each face~$f$ of $\sig$ into a $(d-1)$-dimensional
grid whose cells have \replaced{edge length}{size} $2^i/n^{1/d}$. (If $n^{1/d}/2^i$ is not an integer,
then we use \replaced{cells of edge length}{size} $1/\lceil n^{1/d}/2^i \rceil$; all subsequent arguments work in this case as well.)
Let $G_i$ be the set of
grid points generated over all faces~$f$, and note that
\[|G_i|=O((n^{1/d} / 2^i)^{d-1})=O(n^{1-1/d} / 2^{i(d-1)}).\]
For each $g\in G_i$,
let $H_g$ denote the axis-aligned hypercube of size $2^{i+1}/n^{1/d}$
centered at~$g$; see Fig.~\ref{fig:grid}. Let
$\cH_i \eqdef \{ H_g \mid g\in G_i \}$ be the set of all these hypercubes.
Note that for any segment $s\in\Smid(i)$ there is a hypercube
$H_g\in \cH_i$ that contains~$s$.
Furthermore, all points in
any~$H_g$ have distance at most $2^i/n^{1/d}$ from $\sig$,
and so $P\cap H_g \subseteq P_i(\sig)$ for all~$g$.
\begin{figure}
\begin{center}
\includegraphics{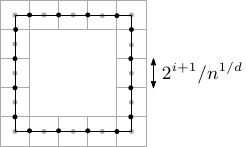}
\end{center}
\caption{The grid~$G_i$ used in the proof of Theorem~\protect\ref{thm:sep-tsp}.
               The grid points are shown in black and gray;
               only the hypercubes $H_g$ of the gray grid points are shown.}
\label{fig:grid}
\end{figure}

Now let $n_g$ denote the number of points from $P$ inside~$H_g$.
Since each point $p\in P$ is contained in a constant (depending on $d$)
number of hypercubes~$H_g$, we have $\sum_{g\in G_i} n_g = O(|P_i(\sig)|)$.
Furthermore, by Packing Property~(PP2) we know that a hypercube~$H_g$
can contain only $O(1)$ segments from  $\Smid(i)$. Thus
\begin{align*}
|\Smid(i)| &= O(\mbox{number of non-empty hypercubes $H_g$}) \\
           &= O(\min(|G_i|,|P_i(\sig)|))\\
           &= O(\min(n^{1-1/d} / 2^{i(d-1)}, |P_i(\sig)|)).
\end{align*}
For $i<0$ we have $|P_i(\sig)| = O((3/2)^i n^{1-1/d})$, which implies
\begin{align*}
|\Smid|
&= \sum_{i} |\Smid(i)| \\
&= O\left( \sum_{i<0} (3/2)^i n^{1-1/d} + \sum_{i\geq 0} n^{1-1/d}/2^{i(d-1)} \right)
= O(n^{1-1/d}).
\end{align*}
Now consider the number of ways in which we can choose $\Smid(i)$.
In a hypercube~$H_g$ \replaced{$\Smid(i)$ has}{we have} $O(1)$ edges which we can choose in
$n_g^{O(1)}$ ways. Hence, if $G^*_i\subseteq G_i$ denotes the collection of grid points~$g$ such
that $n_g>0$ (in other words, such that $H_g$ is non-empty), then
\[
\mbox{ total number of ways to choose~$\Smid(i)$ }
 = \prod\limits_{g\in G^*_i} n_g^{O(1)}
 = 2^{O(\sum_{g\in G^*_i} \log n_g)}.
\]
We bound $\sum_{g\in G^*_i} \log n_g$ separately for $i\geq 0$ and $i<0$.

First consider the case $i\geq 0$. Here we have
\[\sum_{g\in G^*_i} n_g = O(|P_i(\sig)|)=O(4^i n^{1-1/d}).\]
Moreover, $|G^*_i|\leq |G_i| = O\left(\frac{n^{1-1/d}}{2^{i(d-1)}}\right)$ and
$\frac{n^{1-1/d}}{2^{i(d-1)}} \leq 4^i n^{1-1/d}$. Since $G^*_i$ \replaced{contains the
grid points corresponding to}{are} the
non-empty cells, we have $|G_i^*| < c|P_i(\sig)|$ for all $i$, where $c$
is the maximum number of cubes in $\cH_i$ with a non-empty common intersection.
Note that $c$ is a constant depending only on $d$.
Hence,
\begin{align*}
\sum_{g\in G^*_i} \log n_g
& \leq |G^*_i| \cdot \log \left(\frac{|P_i(\sig)|}{|G^*_i|}\right)\\
&< |G^*_i| \cdot \log \left(\frac{c\cdot e\cdot |P_i(\sig)|}{|G^*_i|}\right) \\
& = O\left( \frac{n^{1-1/d}}{2^{i(d-1)}} \cdot \log 2^{i(d+1)} \right)\\
&= O\left( \frac{i(d+1)}{2^{i(d-1)}} \cdot n^{1-1/d} \right).
\end{align*}
In this formula, the first step follows from the AM-GM inequality. The third
step uses that $x\log(c\cdot e\cdot |P_i(\sig)|/x)$ is monotone increasing
for $x\in \oiv{0,c|P_i(\sig)|}$, therefore we can replace
$|G^*_i|$ with $|G_i|$ (since $|G^*_i| < |G_i|$).

Now consider the case~$i<0$. The number of points in the hypercubes is
\[\sum_{g\in G^*_i} n_g = O(|P_i(\sig)|)=O((3/2)^i n^{1-1/d}).\]
Because for $i<0$ we have $n^{1-1/d}/2^{i(d-1)} > (3/2)^i n^{1-1/d}$, the number
of points to distribute is smaller than the number of available
hypercubes, and so $\sum_{g\in G^*_i} \log n_g$ is maximized when
$n_g=2$ for all $g$ (except for at most one grid point~$g$).
Hence,
\[
\sum_{g\in G^*_i} \log n_g = O(|P_i(\sig)|) = O((3/2)^i n^{1-1/d}).
\]
Thus the total number of ways in which we can choose $\Smid$ is bounded by
the following expression, where $i$ ranges from $\imin \eqdef -\log_{3/2} n^{1-1/d} +1$ to $\imax \eqdef \log n^{1/d}$:
\begin{align*}
 \hspace{1cm}\prod_{i} &\expO{\sum_{g\in G^*_i} \log n_g} \\
  & =
\prod_{i<0} \expO{(3/2)^i n^{1-1/d}}
 \prod_{i\geq 0} \expO{\frac{i(d+1)}{2^{i(d-1)}} \cdot n^{1-1/d}}& \\
  & =
 \expO{ \sum_{i<0}  (3/2)^i n^{1-1/d}  +
 \sum_{i\geq 0}  \frac{i(d+1)}{2^{i(d-1)}} \cdot n^{1-1/d} }& \\
& = 2^{O(n^{1-1/d})}. &
\end{align*}
\added{This concludes the proof of the claim.}
}

In order to enumerate candidates for $\Smid$, we can combine all segment choices from the hypercubes $\cH_i$ for each $i$ that were considered in the above claim. There are $2^{O(n^{1-1/d})}$ possible combinations, and they can be enumerated
replaced{in polynomial time}{with polynomial delay}.

Combining these sets with the candidates for $\Sshort$ and $\Slong$, which we already showed how to enumerate earlier,  yields the desired collection $\cC'(\sig,P)$. The size of any combination of three candidate subsets $\Sshort,\Smid,\Slong$ is $O(1) + O(n^{1-1/d}) + O(1)$, and the total number of combinations is $n^{O(1)} \cdot 2^{O(n^{1-1/d})} \cdot n^{O(1)} = 2^{O(n^{1-1/d})}$. The time to enumerate all combinations is asymptotically the same.
\end{proof}


\section{An exact algorithm for TSP}\label{sec:alg}
In this section we design an exact algorithm for TSP using the separator theorem from the previous section.
\replaced{This theorem}{As a first step, let us take a look at the TSP problem in $\Reals^2$.
The separator theorem from the previous section} provides us with a
separator~$\sig$ such that the number of segments from an optimal tour
that cross~$\sig$ is $O(n^{1-1/d})$. Moreover, the number of candidate
subsets $S\in \cC(\sig,P)$ that we need to try is only $2^{O(n^{1-1/d})}$.
We can now obtain a divide-and-conquer algorithm similar to the algorithms
of~\cite{Kann92,SmithW98} in a relatively standard manner. As we shall see, however,
the resulting algorithm would still not run in $2^{O(n^{1-1/d})}$ time.
We will therefore need to modify the algorithm and employ the so-called rank-based approach~\cite{single-exponential} to get our final result. In what follows, we describe an exact algorithm for TSP in $\Reals^d$.
\smallskip

A separator-based divide-and-conquer algorithm for \etsp works
as follows. We first compute a separator using Theorem~\ref{thm:sep-tsp}
for the given point set. For each candidate subset of edges crossing the separator,
we then need to solve a subproblem for the points inside the separator and
one for the points outside the separator.
In these subproblems we are no longer searching for a shortest tour,
but for a collection of paths \added{of minimum total length} that connect
\added{the endpoints of} the edges crossing the separator in a suitable manner.

To define the subproblems more precisely,
let $P$ be a point set and let $M$ be a perfect matching on a set~$B\subseteq P$ of
so-called \emph{boundary points}. We say that a
collection~$\cP=\{\pi_1,\ldots,\pi_{|B|/2}\}$ of paths \emph{realizes $M$ on $P$}
if
(i) for each pair $(p,q)\in M$ there is a path $\pi_i\in \cP$ with~$p$ and~$q$
as endpoints,
and (ii) the paths together visit each point $p\in P$ exactly once.
We define the length of a path $\pi_i$ to be the sum of the Euclidean lengths of its edges, and we define the total length of $\cP$ to be the sum of the lengths of the paths~$\pi_i\in\cP$.
The subproblems that arise in our divide-and-conquer algorithm can now be defined
as follows; \added{see Fig.~\ref{fig:path-cover} for an illustration.}
\begin{figure}[t]
\centering
\includegraphics{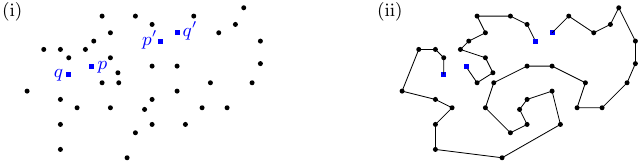}
\caption{(i) An instance of \btsp with $B=\{p,q,p',q'\}$ and $M =\{(p,p'),(q,q')\}$.
         (ii) Solution for the instance from (i). }
\label{fig:path-cover}
\end{figure}

\defproblem{\btsp}
{A point set $P\subset \Reals^d$, a set of boundary points $B\subseteq P$, and a perfect matching $M$ on $B$.}
{Find a collection of paths covering $P$ of minimum total length that realizes~$M$ on~$P$.}
\added{Note that the matching in $M$ does \emph{not} specify edges that must be in the final tour,
but it specifies which pairs of points form the endpoints of the paths in the cover.
Also} note that we can solve \etsp on a point set $P$ by creating
a copy $p'$ of an arbitrary point~$p\in P$, and then solve \btsp on $P\cup\{p'\}$ with
$B\eqdef\{p,p'\}$ and $M \eqdef \{(p,p')\}$.

\added{We will solve this initial instance of  \btsp recursively, as explained in detail below.
Roughly speaking, we will compute a separator $\sigma$ together with its candidate
sets $\cC’(\sig,P)$ of edges crossing the separator, solve suitable recursive problems
inside and outside the separator, and then glue the solutions to the subproblems together to
obtain a solution to the overall problem. (Our actual algorithm will be more involved and use the rank-based approach, to deal with the matching in an efficient manner.)  Note that the ``correct’’ candidate set $S\in\cC’(\sig,P)$
leads to subproblems whose solutions are part of an optimal TSP tour and, hence, satisfy the Packing Property.
``Wrong’’ candidate sets, however,  lead to other subproblems and it is not clear if an optimal
solution to these subproblems satisfies the Packing Property. Hence, our algorithm may not find
an optimal solution to such an instance of \btsp. It might even happen that no valid solution
satisfying the Packing Property exists for certain subproblems. Fortunately this is not a problem,
because of the following two properties: (i) we know that there will be a sequence of recursive calls
that are consistent with an optimal solution for the original TSP problem, and this sequence will
be solved correctly, (ii) the solution (if any)  that is reported for any of the subproblems
is a valid solution. Thus we are guaranteed that we find an optimal TSP tour, and that no
shorter but invalid tour can be reported. With this issue out of the way, we now describe our algorithm in detail.}
\medskip

An instance of \btsp is solved by a separator-based recursive algorithm
as follows. Let $(P,B,M)$ be an instance of \btsp, and let $\sig$ be a
separator for $P$. We consider each candidate set $S\in\cC'(\sig,P)$  of edges
crossing the separator~$\sig$. In fact, it is sufficient to consider candidate
sets where the number of segments from $S$ incident to any point $p\in
P\setminus B$ is at most two, and the number of segments from $S$ incident to
any point in $B$ is at most one.

We now wish to define subproblems for $\sigin$ and $\sigout$ (the regions
inside and outside $\sig$, respectively) whose  combination yields a solution
for the given problem on~$P$. Let $P_1(S)\subseteq P$ denote the set of
endpoints with precisely one incident segment from~$S$, and let
$P_2(S)\subseteq P$ be the set of endpoints with precisely two incident
segments from~$S$. Note that in a solution to the problem $(P,B,M)$ the points
in $B$ need one incident edge---they must become endpoints of a path---while
points in $P\setminus B$ need two incident edges. This means that the points
in $B\cap P_1(S)$ and the points in $P_2(S)$ now have the desired number of
incident edges, so they can be ignored in the subproblems. Points in  $B\sd
P_1(S) \eqdef (B\setminus P_1(S)) \cup (P_1(S) \setminus B)$ still need one
incident edge, while the remaining points in $P\setminus \left((B\cap P_1(S)) \cup
P_2(S)\right)$ still need two incident edges. Hence, for $\sigin$ we obtain
subproblems of the form $(\Pin,\Bin,\Min)$  where
\begin{equation}\label{eq:pindef}
\begin{split}
\Pin &\eqdef \Big(P \setminus \big((B\cap P_1(S)) \cup P_2(S)\big)\Big)\cap \sigin, \\
\Bin &\eqdef \big( B\sd P_1(S) \big) \cap \sigin, \\
\Min &\text{ is a perfect  matching on }\Bin.
\end{split}
\end{equation}
\added{See Figure~\ref{fig:btsp}.}
\begin{figure}[t]
\centering
\includegraphics{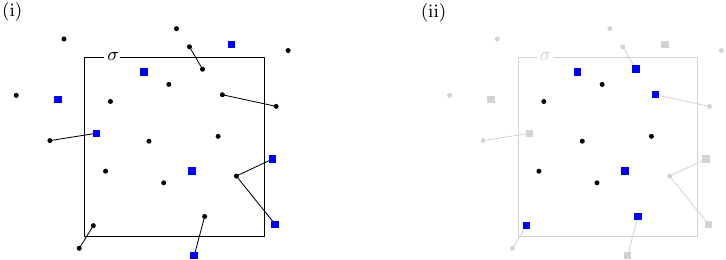}
\caption{(i) A \btsp instance and a separator $\sig$. Blue squares indicate boundary points. (ii) The subproblem generated inside~$\sig$ consists of the black points and blue squares.)}\label{fig:btsp}
\end{figure}
For $\sigout$ we obtain subproblems of the form $(\Pout,\Bout,\Mout)$
defined in a similar way. As already remarked, we can restrict our attention to
candidate sets $S\in \cC(\sig,P)$
that contain at most one edge incident to any given point in~$B$,
and at most two edges incident to any given point in~$P\setminus B$.
Moreover, $S$ should be such that $|\Bin|$ and $|\Bout|$ are even.
We define $\cC^*(\sig,P)$ to be the family of candidate sets $\cC'(\sig,P)$ restricted this way.
Also note that while $\Pin,\Bin$ and $\Pout,\Bout$ are determined
once $S$ is fixed, the algorithm has to find the best matchings
$\Min$ and $\Mout$. These matchings should together realize the
matching $M$ on $P$ and, among all such matchings, we want the pair
that leads to a minimum-length solution.

The number of perfect matchings on $k$ points is $k^{\Theta(k)}$.
Unfortunately, already in the first call of the recursive algorithm $|\Bin|$
and $|\Bout|$ can be as large as $\Theta(n^{1-1/d})$.  Hence, recursively
checking all matchings will not lead to an algorithm with the desired running
time. In $\Reals^2$ we can use that an optimal TSP tour is crossing-free,  so
it is sufficient to look for ``crossing-free matchings'', of which there are
only $2^{O(k)}$. (This approach would actually require a different setup of
the subproblems; see the papers by Deineko~\etal~\cite{DeinekoKW06} and
Dorn~\etal~\cite{DornPBF10}.) However, the crossing-free property has no
analogue in higher dimensions, and it does not hold in $\Reals^2$ for our
``almost-Euclidean'' setting either. Hence, we need a different approach to
rule out a significant proportion of the available matchings.

\paragraph{Applying the rank-based approach}
Next we describe how we can use the rank-based
approach~\cite{single-exponential,CyganKN18} in our setting. We try to avoid the
intricate notation introduced in the original papers, but our terminology is
mostly compatible with~\cite{single-exponential}. A standard
application of the rank-based approach works on a tree-decomposition  of the
underlying graph, where the bags represent vertex separators.
In our application the underlying graph is a complete graph on the
points---all segments  are potentially segments of the TSP tour---and   we
have to use a separator for the edges in the solution.

Let $P$ be a set of points in $\Reals^d$, and let $B\subseteq P$ be a set of
boundary points such that $|B|$ is even. Let $\cM(B)$ denote the set of all perfect
matchings on~$B$, and consider a matching $M\in \cM(B)$. We can turn $M$ into
a weighted matching by assigning to it the minimum total length of any
solution realizing $M$. In other words, $\weight(M)$ is the length of the
solution of \btsp for  input~$(P,B,M)$. Whenever we speak of weighted
matchings in the sequel, we always mean perfect matchings on a set $B\subseteq
P$ weighted as above. We
use $\cM(B,P)$ to denote the set of all such weighted matchings on~$B$. Note
that $|\cM(B,P)| = |\cM(B)| = 2^{O(|B|\log |B|)}$. The key to reducing the
number of matchings we have to consider is the concept of representative sets,
as explained next.

We say that two matchings $M,M' \in \cM(B)$ \emph{fit} if their union is a
Hamiltonian cycle \added{on~$B$}. Consider a pair $P,B$. Let $\cR$ be a set of  weighted
matchings on~$B$ and let $M$ be another matching on~$B$. We define
\[\opt(M,\cR) \eqdef \min \{ \weight(M') \mid M' \in \cR\text{ and } M' \mbox{ fits }
M\},\]
that is, $\opt(M,\cR)$ is the minimum total length of any collection of
paths on $P$ that together with the matching $M$ forms a cycle.

A set $\cR \subseteq \cM(B,P)$ of weighted matchings  is defined to be\index{representative set}
\emph{representative} of another set $\cR' \subseteq \cM(B,P)$ if for any
matching $M\in \cM(B)$ we have $\opt(M,\cR) = \opt(M,\cR')$. Note that our
algorithm is not able to compute a representative set of $\cM(B,P)$, because
it is also restricted by the Packing Property, while a solution of \btsp
for a generic $P,B,M$ may not satisfy it. Let $\cM_\PP(B,P)$ denote
the set of weighted matchings in $\cM(B,P)$ that have a corresponding \btsp
solution satisfying the Packing Property.

The basis of the rank-based method is the following result.
\begin{lemma} {\rm\bf [Bodlaender~\etal~\cite[Theorem 3.7]{single-exponential}]} \label{lem:repr}
There exists a set $\cR^*$ consisting of $2^{|B|-1}$ weighted matchings that is representative of the set $\cM(B,P)$.  Moreover, there is an algorithm \emph{Reduce} that, given
a representative set~$\cR$ of $\cM(B,P)$, computes such a set $\cR^*$ in
$|\cR| \cdot 2^{O(|B|)}$ time.
\end{lemma}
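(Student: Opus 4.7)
My plan is to reduce the existence of $\cR^*$ to a rank bound on a matrix capturing the ``fit'' relation, and then obtain the algorithm by a weight-sensitive Gaussian elimination. Define the matrix $A\in \{0,1\}^{\cM(B)\times \cM(B)}$ by $A[M,M']=1$ iff $M\cup M'$ is a single Hamiltonian cycle on $B$ (equivalently, $M$ fits $M'$). The central combinatorial claim that drives the whole proof is that $A$, viewed as a matrix over $\mathrm{GF}(2)$, has rank at most $2^{|B|-1}$. Any set of matchings whose rows span the row space of $A$ is automatically representative, since for any ``test'' matching $M_0$ indexing a column, the question ``does some $M\in \cR$ fit $M_0$?'' is determined by the row space; to recover weights one then just picks the minimum-weight matching associated with each row of the basis.

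To prove the rank bound I would construct an explicit factorization $A = B C^T$ with $B,C \in \{0,1\}^{\cM(B)\times 2^{|B|-1}}$. Fix an arbitrary reference element $b_0\in B$, and to each matching $M$ associate a signature vector $v_M\in\mathrm{GF}(2)^{B\setminus\{b_0\}}$ built from $M$'s pairing structure (for example, traverse the alternating walk in $M\cup M'$ starting from $b_0$ and read off a parity pattern indexed by the other vertices). The task is to design this encoding so that whether $M\cup M'$ is one cycle (as opposed to a disjoint union of shorter cycles) is determined by a fixed bilinear form on $v_M$ and $v_{M'}$. Since the encodings live in a space of dimension $|B|-1$, the resulting factorization forces $\mathrm{rank}(A)\leq 2^{|B|-1}$. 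I expect this encoding step, and the verification that ``fits = single Hamiltonian cycle'' corresponds to the prescribed bilinear relation, to be the main obstacle; it is the combinatorial heart of the result, and the specific choice of signature is what yields the exact exponent $|B|-1$ rather than something larger like $|B|\log|B|$.

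Given the rank bound, the \emph{Reduce} algorithm is a standard weight-aware Gauss--Jordan elimination. Sort the matchings of $\cR$ in non-decreasing order of $\weight$, then process them one by one, maintaining an echelon form over $\mathrm{GF}(2)$ of the rows of $A$ restricted to the matchings already selected. Each new matching $M$ is added to $\cR^*$ iff its row is linearly independent of the rows accumulated so far; otherwise it is discarded. Correctness of representativeness follows from the spanning property of $\cR^*$; the ``minimum weight first'' ordering guarantees that for every matching $M\in \cR$ whose row is dependent on earlier rows, there is a combination of lighter matchings in $\cR^*$ that witness the same (or smaller) value of $\opt(\cdot,\{M\})$, so no optimum is lost. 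The running time is dominated by $|\cR|$ rank updates on vectors of length $2^{|B|-1}$, giving $|\cR|\cdot 2^{O(|B|)}$ as claimed.
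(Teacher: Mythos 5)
The paper does not prove this lemma at all: it is imported verbatim from Bodlaender~et~al.\ (Theorem~3.7 of the cited work), so your attempt should be judged as a reconstruction of that external proof. Your overall strategy is the right one (a low-rank factorization of the ``fits'' matrix over $\mathrm{GF}(2)$, followed by a min-weight-first greedy row basis), and the \emph{Reduce} paragraph correctly captures why the greedy basis preserves $\opt(\cdot,\cdot)$ and not merely non-emptiness. But there is a genuine gap exactly where you flag ``the combinatorial heart'': the factorization is never constructed, and the sketch you give of it is dimensionally inconsistent. You claim $A=BC^T$ with inner dimension $2^{|B|-1}$, yet describe the signature $v_M$ as a vector in $\mathrm{GF}(2)^{B\setminus\{b_0\}}$, i.e.\ of length $|B|-1$; a factorization through a space of dimension $|B|-1$ would give rank at most $|B|-1$, not $2^{|B|-1}$, and no bilinear form on such short signatures can detect ``single cycle versus several cycles.'' The actual construction indexes the columns by the $2^{|B|-1}$ subsets $X\subseteq B$ with $b_0\in X$ (cuts), sets $\cC[M,X]=1$ iff no edge of $M$ crosses the cut $(X,B\setminus X)$, and uses the identity that the number of such cuts consistent with $M\cup M'$ equals $2^{c-1}$ where $c$ is the number of cycles of $M\cup M'$; reducing mod $2$ gives exactly the indicator of a Hamiltonian cycle. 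Without this identity the rank bound, and hence the exponent $|B|-1$, is unproved.

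A second, related problem: your elimination maintains an echelon form of ``the rows of $A$ restricted to the matchings already selected,'' but rows of $A$ have length $|\cM(B)|=2^{\Theta(|B|\log|B|)}$, so processing them is too slow for the claimed $|\cR|\cdot 2^{O(|B|)}$ bound. The algorithm must run Gaussian elimination on the rows of the cut matrix $\cC$ (length $2^{|B|-1}$), and one then needs the observation that a linear dependence among rows of $\cC$ induces the same dependence among rows of $A=\cC\cC^T$. So the missing factorization is not just the existence half of the lemma; it is also what makes the algorithmic half feasible.
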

Lemma~\ref{lem:repr} also holds for our case, where $\cR$ is representative of $\cM_\PP(B,P)$. The result of Bodlaender~\etal{} is actually more general than stated above,
as it not only applies to matchings but also to other types of partitions.
Moreover, for matchings the bound has been improved to $2^{|B|/2-1}$~\cite{CyganKN18}.
However, Lemma~\ref{lem:repr} suffices for our purposes.

\added{We now briefly sketch the main ideas behind the rank-based approach. Consider a matrix $\cA$ whose rows and columns are indexed by perfect matchings of $B$. One can think of the row index as the matching realized by the subproblem inside the separator, and the column index as the matching realized by the subproblem outside the separator. An entry in $\cA$ is $1$ if the pair of corresponding matchings fit, that is, their union is a Hamiltonian cycle on $B$, and the entry is $0$ otherwise.

Consider a set $X$ of rows in $\cA$ that is linearly dependent over the field $\mathrm{GF}(2)$. Suppose that $M$ is a matching in $X$ of largest weight.
Bodlaender et al.~\cite{single-exponential} show that $X\setminus \{M\}$ is a representative set for the set $X$ --- this is because every matching (column) that combined with $M$ gives a solution that can also be combined with a matching (row) in $X\setminus \{M\}$ to a solution (of equal or lower weight). Thus, when we have a
set of rows in $\mathcal{A}$ that is linearly dependent over the field $\mathrm{GF}(2)$, then the heaviest matching in this set is redundant and can be removed.
Consequently, it is sufficient to compute a minimum weight row basis of $\cA$, as the matchings corresponding to the rows in such a basis form a representative set. The basis can be computed using standard linear algebra; this is done by the Reduce algorithm of Lemma~\ref{lem:repr}. In order to bound the size of the obtained representative set, it is enough to bound the rank of $\cA$. Bodlaender~\etal{} prove that $\cA$ has rank at most $2^{|B|-1}$ by providing a factorization $\cA =  \hat\cA^T\hat \cA$ where $\hat \cA$ has $2^{|B|-1}$ rows.}

Lemma~\ref{lem:repr} bounds the size of the representative set in terms
of~$|B|$, the number of boundary points. In the first call of our algorithm
$|B| = O(n^{1-1/d})$ because of the properties of our separator, but we have
to be careful that the size of~$B$ stays \replaced{sufficiently small}{under control} in recursive calls.

Algorithm~\ref{alg:TSPdd} on page~\pageref{alg:TSPdd} describes how we deal
with this. A key step in the algorithm is Step~\ref{step:balance}, where we invoke the
balance condition of the separator with respect to $B$ or $P$ depending on the
size of $B$ relative to the size of $P$. (The constant $\gamma$ will be
specified in the analysis of the running time.) Steps~\ref{step:combine}--Steps~\ref{step:reduce} combine
the representative sets  $\cR_\myin$ and $\cR_\myout$. Next we explain these steps in more detail.
\begin{algorithm}[t]
\begin{algorithmic}[1]
\Statex \textbf{Input:} A set $P$ of points in $\Reals^d$ and a subset $B\subseteq P$
\Statex \textbf{Output:} A set $\cR\subset \cM(B,P)$ with $|\cR| \leq 2^{|B|-1}$ representing $\cM_\PP(B,P)$
\If{$|P| \leq 1$}
   \Return $\{(\emptyset,0)\}$
\EndIf
  \State $\cR \gets \emptyset$
  \State \label{step:balance}Compute~$\sig$ (by Thm.~\ref{thm:sep-tsp}) with $Q\!=\!P$ if
         $|B| \!\leq\! \gamma|P|^{1-1/d}$ and $Q\!=\!B$ otherwise.
         \ForAll{candidate sets $S\in \cC^*(\sig,P)$ } \label{step:outerloop}
         	\State $\cR_\myin \gets $ \emph{TSP-Repr}($\Pin,\Bin$), where $\Pin$ and $\Bin$ are defined as in~\eqref{eq:pindef}
         	\State $\cR_\myout \! \gets\!$ \emph{TSP-Repr}($\Pout,\Bout$), where $\Pout$ and $\Bout$ are defined as in~\eqref{eq:pindef}
         	\ForAll{$\Min \in \cR_\myin$ and $\Mout \in \cR_\myout$} \label{step:combine}
            	\If{$\Min$ and $\Mout$ are compatible} 	  		
                \State \label{step:insert} Add $(\mathrm{Join}_S(\Min,\Mout), \weight(\Min) \!+\! \len(S) \!+\! \weight(\Mout))$ to $\cR$
                 \EndIf
         	\EndFor
           \EndFor
  \State \label{step:reduce} $\cR \gets$ \emph{Reduce}$(\cR)$
\State \Return{$\cR$}
\end{algorithmic}
\caption{\emph{TSP-Repr}$(P,B)$}\label{alg:TSPdd}
\end{algorithm}

Consider a set $S\in \cC^*(\sig,P)$, a matching $\Min \in \cM(\Bin)$ and a
matching $\Mout \in \cM(\Bout)$. Let $G=G_S (\Min, \Mout)$ be the graph with
vertex set $V(G) \eqdef B\cup P_1(S) \cup P_2(S)$ and edge set $E(G) \eqdef
\Min \cup \Mout\cup S$.  We say that $\Min$ and $\Mout$ are \emph{compatible}\index{compatible matchings}
if $G$ consists of $|B|/2$ disjoint paths covering $V(G)$ whose endpoints are
exactly the points in~$B$. A pair of compatible matchings  induces a perfect
matching on $B$, where for each of these $|B|/2$ paths  we add a matching edge
between its endpoints. We denote this matching by $\mathrm{Join}_S(\Min,
\Mout)\in \cM(B)$. To get a set~$\cR$ of weighted matchings on $B$ we thus
iterate in Steps~\ref{step:combine}--\ref{step:insert} through all pairs
$\Min,\Mout$ where $\Min$ and $\Mout$ are compatible, and for such pairs, we
add to $\cR$ the matching $\mathrm{Join}_S(\Min,\Mout)$. The weight of this
matching is $\weight(\Min)+\len(S)+\weight(\Mout)$.

\added{Recall that the initial call to \emph{TSP-Repr}$(P,B)$ is done by creating
a copy $p'$ of an arbitrary point~$p\in P$, and then solving \btsp on $P\cup\{p'\}$ with $B\eqdef\{p,p'\}$ and $M \eqdef \{(p,p')\}$. This solves \etsp as desired, since the representative set $\cR$ consists of a single matching $\{(p,p')\}$ whose weight is the optimum Euclidean TSP tour of $P$. One can use standard techniques to obtain the optimum tour itself with a slightly modified algorithm. The correctness of our algorithm is implied by the following claim.}

\myclaim{
The set $\cR$ created in Lines~\ref{step:outerloop}--\ref{step:insert} of Algorithm~\ref{alg:TSPdd} is representative of $\cM_\PP(B,P)$.
}{
The proof is by induction on $|P|$. Clearly, for $|P|\leq 1$ the claim holds. Otherwise, let $S\in \cC^*(\sigma,P)$ be fixed. The set $S$ is considered in some iteration of the outer loop. Define $\Pin,\Pout,\Bin,\Bout$ as in this iteration, and let $\cR_\myin,\cR_\myout$ be the sets returned by the recursive calls, which are representative sets  of $\cM_\PP(\Bin,\Pin)$ and $\cM_\PP(\Bout,\Pout)$ respectively by induction. Notice that $S$ can be regarded as a \btsp solution for $(\Bin\cup\Bout,\Bin\cup\Bout,M_S)$, where $M_S$ is the matching realized by $S$ on $\Bin\cup\Bout$. Let $\len(S)$ be the weight assigned to $M_S$. Clearly $\{M_S\}$ is representative of $\{M_S\}$. Now our $\mathrm{Join}_S$ operation can be regarded as the succession of two join operations as defined by~\cite{single-exponential}, applied to $\{M_S\}$ and $\cR_\myin$ first, and then to the result and $\cR_\myout$ second. By Lemma 3.6 in~\cite{single-exponential}, the join operation preserves representation, therefore the matchings added to $\cR$ in this iteration of the outer loop form a representative set of
\begin{align*}
\widehat{\cM}_S \eqdef \{\mathrm{Join}_S(\Min,\Mout) \;|\; &\Min \in \cM_\PP(\Bin,\Pin),\\ &\Mout \in \cM_\PP(\Bout,\Pout)\big\}.
\end{align*}
Consequently, the set $\cR$ that is created at the end of the outer loop is a representative set of $\widehat{\cM} \eqdef \bigcup_{S \in \cC^*(P,\sig)} \widehat{\cM}_S$.
The set $\widehat{\cM}_S$ contains the subset of $\cM_\PP(B,P)$ that has a corresponding optimum  with the Packing Property that intersects $\sigma$ in $S$, because for any such optimum path cover $\cP$, the subpaths of $\cP$ induced by $\Pin$ also have the Packing Property, and form an optimal  \btsp for the input $(\Pin, \Bin, \Min)$, i.e., there is a corresponding weighted matching $\Min\in \cM_\PP(\Bin,\Pin)$. (The analogous statement is true for the subpaths induced by $\Pout$.) Since $\cC^*(\sigma,P)$ contains all sets $S$ that can arise as the set of segments intersecting $\sig$ in an optimum \btsp solution with the Packing Property, it follows that $\widehat{\cM} \supseteq \cM_\PP(B,P)$, which concludes the proof of the claim.
}

Notice that lines \ref{step:outerloop}-\ref{step:insert} can be implemented using a brute-force algorithm; this takes
\[O\left(|\cR_\myin| \cdot |\cR_\myout| \cdot \poly(|B|+|S|)\right)\] time.
However, by combining $\cR_\myin$ and $\cR_\myout$ in this way, the size of $\cR$
may be more than~$2^{|B|-1}$.
Hence, we apply the \emph{Reduce} algorithm~\cite{single-exponential},
to create a representative set of size at most $2^{|B|-1}$
in $|\cR|\cdot 2^{O(|B|)}$ time. Since our recursive algorithm
ensures that $|\cR_\myin| \leq 2^{|\Bin|-1}$ and $|\cR_\myout|\leq 2^{|\Bout|-1}$,
all of the above steps run in $2^{O(|B|+|S|)}=\expO{|B|+|P|^{1-1/d}}$ time.

\subsection{Detailed analysis of the running time}\label{subsec:analysis}
In this section, we provide a detailed analysis of the running time of our algorithm.

The running time of \emph{TSP-Repr}$(P,B)$ essentially satisfies the following recurrence,  where $c_0,c_1,c_2$ are positive constants and we use the notation $n\eqdef|P|$ and $b\eqdef|B|$.
\[
T(n,b) \leq
\begin{cases}
c_0 &\text{ if } n\leq 1\\
2^{c_1(n^{1-1/d}+b)}T\left(\delta n,\, b + c_2 n^{1-1/d}\right) &\text{ if } b \leq \gamma n^{1-1/d}\\
2^{c_1(n^{1-1/d}+b)}T\left(n,\, \delta b + c_2 n^{1-1/d}\right) &\text{ if } b > \gamma n^{1-1/d},
\end{cases}
\]
The actual recurrence is a bit more subtle, as explained next.

For each $S\in \cC^*(\sig,P)$, let $n_{S,\myin}\eqdef |\Pin|$,
let $b_{S,\myin} \eqdef |\Bin|$, let $n_{S,\myout} \eqdef|\Pout|$,
and let $b_{S,\myout}\eqdef |\Bout|$. By the discussion in Section~\ref{sec:alg}, we can bound the running time of the two inner loops, the \emph{Reduce} algorithm and the rest of the operations outside the recursive calls by $\exp(c_3(n^{1-1/d}+b))$ for some positive constant $c_3$. Therefore, the algorithm \emph{TSP-Repr}$(P,B)$ obeys the following recursion,
where $\sig_P$ and $\sig_B$ are separators balanced with respect to $P$ and $B$,
respectively.
\begin{multline*}
T(n,b) \leq \\
\begin{dcases}
\hspace{0.5cm}c_0 & \text{ if } n\leq 1 \\[.6em]
\!\sum_{S \in \cC^*(\sig_P,P)} \hspace{-0.4cm}\big( \exp(c_3(n^{1-1/d}+b)) \!+\! T(n_{S,\myin},b_{S,\myin})\!+\!T(n_{S,\myout},b_{S,\myout})\big) & \text{ if } b \leq \gamma n^{1-1/d}\\
\!\sum_{S \in \cC^*(\sig_B,P)} \hspace{-0.4cm} \big(\exp(c_3(n^{1-1/d}+b)) \!+\! T(n_{S,\myin},b_{S,\myin})\!+\!T(n_{S,\myout},b_{S,\myout})\big) & \text{ if } b > \gamma n^{1-1/d}.
\end{dcases}
\end{multline*}
Note that the terms in the second and third case are the same, except that the second case uses a separator $\sig_P$ while the third case uses a separator $\sig_B$. This in turn will influence the bounds on $n_{S,\myin}$ and $b_{S,\myin}$ (and, similarly, $n_{S,\myout}$ and $b_{S,\myout}$).
\begin{lemma}
\added{For a suitable choice of $\gamma$ in Step~\ref{step:balance} of \emph{TSP-Repr} we have}
$T(n,2) = 2^{O(n^{1-1/d})}$.
\end{lemma}
\begin{proof}
We prove by induction that $T(n,b) \leq \exp(d_1 n^{1-1/d} + d_2 b)$ for some constants $d_1$ and $d_2$ and for all $1\leq b \leq n$. This clearly holds for $b,n\leq 1$, \added{when, as we will ensure, we have $d_1,d_2\geq c_0$}. Hence, by induction, for each $S$ we have

\begin{align*}
&  \exp(c_3(n^{1-1/d}+b)) + T(n_{S,\myin},b_{S,\myin})+T(n_{S,\myout},b_{S,\myout})\\
& \hspace{0.5cm}  \added{\leq \exp\left(c_3(n^{1-1/d}+b)\right)+\exp\left(d_1 n_{S,\myin}^{1-1/d} + d_2 b_{S,\myin}\right) + \exp\left(d_1 n_{S,\myout}^{1-1/d} + d_2 b_{S,\myout}\right)}\\
& \hspace{0.5cm} \added{\leq \exp\left(c_3(n^{1-1/d}+b)\right)+2\exp\left(d_1 n_{S,\max}^{1-1/d} + d_2 b_{S,\max}\right),}
\end{align*} 
\replaced{where $n_{S,\max} = \max(n_{S,\myin},n_{S,\myout})$ and $b_{S,\max} = \max(b_{S,\myin},b_{S,\myout})$.}{where the additions were replaced by multiplications.} 
\added{Note that for $x,y\geq 2$ we have $\exp(x)+2 \exp(y) \leq \exp(x+y)$. Therefore, since $c_3>2$ and we may assume that $d_1,d_2>2$, we can conclude that
\begin{align*}
&\exp(c_3(n^{1-1/d}+b)) + T(n_{S,\myin},b_{S,\myin})+T(n_{S,\myout},b_{S,\myout})\\
& \hspace{2cm} \leq \exp\left(c_3(n^{1-1/d}+b) + d_1 n_{S,\max}^{1-1/d} + d_2 b_{S,\max}\right).
\end{align*}
}
Let $c_2$ and $c_4$ be the constants from part (ii) and (iii) of Theorem~\ref{thm:sep-tsp}. For any $S\in \cC^*(\sig_B,P)$, we have \replaced{$b_{S,\max}\leq \delta b  + c_2 n^{1-1/d}$}{ $b_{S,\myin}\leq \delta b  + c_2 n^{1-1/d}$ and $b_{S,\myout}\leq \delta b + c_2 n^{1-1/d}$}; similarly, for any $S\in \cC^*(\sig_P,P)$, we have \replaced{$n_{S,\max}\leq \delta n$}{ $n_{S,\myin}\leq \delta n$ and $n_{S,\myout}\leq \delta n$}. Note that we always have the trivial bounds $b_{S,\added{\max}}\leq b  + c_2 n^{1-1/d}$ and $n_{S,\added{\max}}\leq n$ as well. Since $|\cC^*(\sig,P)| \leq \exp(c_4 n^{1-1/d})$, we get the following:

\[
T(n,b) \leq
\begin{cases}
c_0 & \text{ if } n\leq 1\\[.6em]
\exp\big(c_1(n^{1-1/d}+b)  +d_1 (\delta n)^{1-1/d} + d_2(b + c_2 n^{1-1/d})\big)
& \text{ if } b \leq \gamma n^{1-1/d}\\[.6em]
\exp\big(c_1(n^{1-1/d}+b)  +d_1 n^{1-1/d} + d_2(\delta b + c_2 n^{1-1/d})\big)
& \text{ if } b > \gamma n^{1-1/d},
\end{cases}
\]
where $c_1 = c_3+c_4$. We set $c \eqdef \max(c_1,c_2)$, and let $\gamma \eqdef \frac{2c}{1-\delta}$. (Notice that the definition of $\gamma$ here is valid: it is independent of $d_1$ and $d_2$.)

If $b \leq \gamma n^{1-1/d} = \frac{2c}{1-\delta}n^{1-1/d}$, we have the following:
\begin{align*}
T(n,b) & \leq \exp\big(c n^{1-1/d} + cb + d_1(\delta n)^{1-1/d} + d_2b+d_2cn^{1-1/d})\big)\\
& \leq \exp\left(\left(c + \frac{2c}{1-\delta}+d_1\delta^{1-1/d}+d_2c\right)n^{1-1/d}+d_2 b\right)\\
&\leq \exp(d_1 n^{1-1/d} + d_2 b),
\end{align*}
where the second inequality uses $b \leq \frac{2c}{1-\delta}n^{1-1/d}$ and the third uses
\[c + \frac{2c}{1-\delta}+d_1\delta^{1-1/d}+d_2c \leq d_1.\]
This can be ensured by setting \[d_1 \eqdef \added{\max}\left(\added{c_0},\frac{c + 2c/(1-\delta) +d_2c}{1-\delta^{1-1/d}}\right).\]
Note that \replaced{}{$0<\delta<1$ and $c>0$ are fixed constants. Moreover, $d_2$ will be a positive constant as well; see below. Hence,} $d_1$ is a positive constant.

Finally, if $b > \gamma n^{1-1/d} = \frac{2c}{1-\delta}n^{1-1/d}$, we have the following:
\begin{align*}
T(n,b) & \leq \exp\big(c n^{1-1/d} + cb + d_1 n^{1-1/d} + d_2\delta b+d_2cn^{1-1/d})\big)\\
& < \exp\left(d_1 n^{1-1/d}+\left(\frac{1-\delta}{2} + c + \delta d_2 + \frac{1-\delta}{2}d_2\right) b\right)\\
&\leq \exp(d_1 n^{1-1/d} + d_2 b),
\end{align*}
where the strict inequality uses $cn^{1-1/d} <\frac{1-\delta}{2}b$, and the final inequality uses
\[\frac{1-\delta}{2}+c+ \frac{1+\delta}{2}d_2 \leq d_2.\]
We can ensure this by setting
\[d_2 \eqdef \added{\max}\left(\added{c_0},\frac{(1-\delta)/2+c}{1-(1+\delta)/2}\right),\]
therefore $d_2$ is a positive constant. Since there exists positive constants $d_1$ and $d_2$ satisfying the above inequalities, we have that
$T(n,b) \leq \exp(d_1 n^{1-1/d} + d_2 b)$, and in particular, for the initial call we have $T(n,2) = 2^{O(n^{1-1/d})}$.
\end{proof}


\section{Almost Euclidean TSP}\label{sec:packing}
So far we considered \etsp in the real-RAM model of
computation.
We now consider a slightly more general scenario in the Word-RAM model.
Here we assume that the input is a set $P$ of $n$ points in $\Reals^d$,
specified by rational coordinates, as well as a distance matrix~$D$. The basic
assumption we make is that the distances in $D$ approximate the real Euclidean distances
well. More precisely, we require that the ordering of pairwise distances on
the given point set $P \eqdef \{p_1,\ldots,p_n\}$ is
preserved: if $|p_i p_j| < |p_k p_\ell|$ then $D[i,j] < D[k,\ell]$. \added{We remark  that the entries of $D$ need not satisfy the triangle inequality.}
We call this the \emph{almost Euclidean} version of \TSP.

In order to show that our algorithms work in this setting, we only need to show that
an optimal tour in this setting satisfies Packing Property.
Note that the Packing Property for the almost Euclidean version immediately implies
that the Packing Property also holds for the Euclidean version (where, as remarked
in Section~\ref{sec:bal-sep}, similar properties were already known). Recall that
a set $S$ of segments in $\Reals^d$ has the \emph{Packing Property} if \index{Packing Property}
for any \replaced{hypercube}{separator}~$\sig$ we have
\begin{itemize}
\item (PP1):
  $\big| \{ s\in S \mid s \mbox{ crosses } \sig \mbox{ and } \len(s) \geq \size(\sig) \}  \big| \!=\! O(1)$ \\[-4mm]
\item (PP2):
  $\big|  \{ s\in S \mid s \subset \sigin \mbox{ and }  \len(s) \geq \size(\sig)/4 \}  \big| \!=\! O(1)$.
\end{itemize}

\begin{theorem}
Let $P\eqdef \{p_1,\ldots,p_n\}$ be a point set in $\Reals^d$ and let $D$ be a distance matrix for $P$
such that we have $|p_i p_j| < |p_k p_\ell|$ if and only if $D[i,j] < D[k,\ell]$. Let $T$ be a tour on $P$ that is
optimal for the distances given by~$D$. Then the set of edges of $T$ has the Packing Property.
\end{theorem}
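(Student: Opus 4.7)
The plan is to prove both (PP1) and (PP2) by 2-opt style exchange arguments that leverage the order-preservation of $D$.

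The key lemma I would use is: if $T$ is $D$-optimal and a 2-opt swap replaces two edges $e_1,e_2\in T$ by segments $f_1,f_2$ (yielding a valid tour $T'$), and if there is a bijection $f_k\leftrightarrow e_{\pi(k)}$ with $|f_k|<|e_{\pi(k)}|$ in Euclidean norm for each $k$, then by the order-preserving property of $D$ we have $D[f_k]<D[e_{\pi(k)}]$ for each $k$, so $T'$ has strictly smaller total $D$-weight than $T$, contradicting $D$-optimality. Hence it suffices to show that any violation of (PP1) or (PP2) exhibits such a beneficial swap.

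For (PP2), I would partition $\sigin$ into sub-cubes of side $\size(\sig)/(16\sqrt{d})$, so that two points in a common sub-cube are at Euclidean distance less than $\size(\sig)/16<\size(\sig)/4$. Each edge of $T$ entirely inside $\sigin$ with length $\geq\size(\sig)/4$ contributes a pair of sub-cubes indexed by its tour-ordered endpoints. If more than $(16\sqrt{d})^{2d}$ such edges exist, pigeonhole yields two edges $e_i=a_ib_i$ and $e_j=a_jb_j$ sharing the same pair, so $|a_ia_j|,|b_ib_j|<\size(\sig)/4\leq\min(|e_i|,|e_j|)$. The 2-opt swap $\{e_i,e_j\}\to\{a_ia_j,b_ib_j\}$ replaces each old edge by a strictly shorter new edge, contradicting the key lemma.

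For (PP1), I would orient each crossing edge so its inside endpoint comes first in tour order; at least half the edges share this orientation, so restrict to these (each is $e_k=a_kb_k$ with $a_k\in\sigin$ and $b_k\in\sigout$). Bin each edge by (i) the sub-cube of $\sigin$ of side $\alpha\,\size(\sig)$ containing $a_k$, and (ii) the spherical cone of angular radius $\beta$ containing the unit direction $(b_k-a_k)/|b_k-a_k|$, where $\alpha,\beta$ are constants depending only on $d$ chosen with $\alpha\sqrt{d}+\beta<1$ (e.g.\ $\alpha=1/(4\sqrt{d})$ and $\beta=1/2$). The number of bins is $O(1)$. If two edges $e_i,e_j$ share a bin, with $\ell_i:=|e_i|\leq\ell_j:=|e_j|$, then $|a_ia_j|\leq\alpha\sqrt{d}\,\size(\sig)<\ell_i$, and expanding $b_i-b_j=(a_i-a_j)+\ell_i(\vec{w}_i-\vec{w}_j)+(\ell_i-\ell_j)\vec{w}_j$ yields $|b_ib_j|\leq\alpha\sqrt{d}\,\size(\sig)+\beta\ell_i+(\ell_j-\ell_i)<\ell_j$. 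The 2-opt swap $\{e_i,e_j\}\to\{a_ia_j,b_ib_j\}$, with $b_ib_j$ paired to the longer old edge $e_j$, produces two new edges each strictly shorter than the corresponding old edge, contradicting the key lemma. Hence each bin holds $O(1)$ edges and (PP1) follows.

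The main subtle point -- the one that avoids any pigeonholing by length and thereby achieves the $O(1)$ bound in (PP1) -- is that in the 2-opt swap we pair the new segment $b_ib_j$ with the \emph{longer} of the two old edges. The bound $|b_ib_j|<\ell_j$ then follows from the triangle inequality using only closeness of the inside endpoints and directions, regardless of how much the two lengths differ; this is what permits a constant bound depending only on~$d$.
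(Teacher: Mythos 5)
Your proof is correct and follows essentially the same strategy as the paper's: pigeonhole the long crossing edges by the location of their inside endpoints and by direction, then perform a 2-opt exchange whose improvement is certified \emph{edge-by-edge} via the order-preservation of $D$, with the crucial pairing of the new outside-connecting segment against the \emph{longer} of the two removed edges (exactly the pairing the paper uses to get $|p_ip_j|>|p_jp_\ell|$ and $|p_kp_\ell|>|p_ip_k|$). The remaining differences are cosmetic rather than structural: you prove (PP2) directly instead of deriving it from (PP1), you pre-orient the crossing edges so that two edges per bin suffice where the paper pigeonholes three to extract two with consistent orientation, and you replace the paper's angle-chasing with the triangle-inequality expansion of $b_i-b_j$ --- a somewhat cleaner write-up of the same argument.
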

\begin{proof}
We first prove Packing Property (PP1) and then argue that (PP2) follows from (PP1).

Let $\sig$ be a hypercube, and suppose without loss of generality that $\size(\sig) = 1$.
Suppose for a contradiction that there are more than $c$ tour edges of length at least~$1$ that cross~$\sig$,
where $c$ is a suitably large constant (which depends on $d$). By the pigeonhole principle,
we can then find three edges in $T$ such that
(i) the pairwise Euclidean distances between the endpoints of these edges that lie
    inside $\sig_\myin$ is at most $1/10$, and
(ii) the pairwise angles between these edges is at most $\pi/30$.
Here the angle between two edges is measured as the smaller angle between two lines going through
the origin and parallel to the given edges.

Now fix an orientation on the tour $T$ such that at least two of the three edges cross $\sig$
from inside to outside, and orient these edges accordingly. Let $p_i p_j$ and $p_k p_{\ell}$
denote these two oriented edges; see Figure~\ref{fig:pp1}.
\begin{figure}[t]
\centering
\includegraphics[width = 0.7\textwidth]{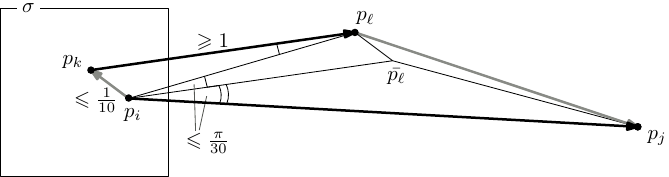}
\caption{Since $|p_i p_j|>|p_j p_{\ell}|$ and $|p_k p_{\ell}|>|p_i p_k|$, we can exchange $p_i p_j$
and $p_k p_{\ell}$ for $p_i p_k$ and $p_j p_{\ell}$ in the tour, and get a shorter tour.
Note for $d>2$ the points do not all have to lie in the same plane.}\label{fig:pp1}
\end{figure}
Thus $p_i,p_k\in \sig_\myin$ and $|p_i p_k|\leq 1/10$.
Assume without loss of generality that $|p_i p_j| \geq |p_k p_{\ell}|$.

Our goal now is to show that $|p_k p_{\ell}|>|p_i p_k|$ and $|p_i p_j|>|p_j p_{\ell}|$;
this will help us find a way to reroute the current tour into a shorter one, leading to a contradiction.
The first inequality is easy: $|p_k p_{\ell}|>|p_i p_k|$ since $|p_k p_{\ell}|\geq 1$ and $|p_i p_k|\leq 1/10$.
In order to show $|p_i p_j|>|p_j p_{\ell}|$, we proceed as follows.

In the triangle $p_i p_k p_{\ell}$, we have $|p_k p_{\ell}|\geq 1$ and $|p_i p_k|\leq 1/10$,
so by the law of sines,
$\sphericalangle (p_i p_{\ell} p_k) \leq \arcsin (1/10) < \pi/30$.
Let $\bar{p}_{\ell}=p_i+(p_{\ell}-p_k)$. Here and in the sequel we use that a point can also be thought of as a vector,
so we can add and subtract points to get new points.
Then we have that $\sphericalangle (p_{\ell} p_i \bar{p}_{\ell}) = \sphericalangle (p_i p_{\ell}p_k) \leq \pi/30$.
Due to our choice of $p_i p_j$ and $p_k p_{\ell}$ their angle is at most $\pi/30$, so we have
\[
\sphericalangle (\bar{p}_{\ell}p_i p_j ) = \sphericalangle \big((p_j-p_i),(p_{\ell}-p_k)\big)  \leq \pi/30.
\]
Therefore, $\sphericalangle (p_{\ell} p_i p_j) \leq \pi/15$. Note that this is also true if the points do
not all lie in the same plane.
Now observe that
\[
 |p_i p_{\ell}| \leq |p_k p_{\ell}| + |p_i p_k| \leq |p_i p_j|+1/10,
\]
that is, $p_i p_{\ell}$ cannot be much longer than $p_i p_j$.
Thus, if we look at the triangle $p_i p_{\ell} p_j$, then we have
$|p_i p_j|\geq 1$ and $\sphericalangle (p_{\ell}p_i p_j) \leq \pi/15$, and
$|p_i p_{\ell}| \leq |p_i p_j|+1/10 \leq \frac{11}{10}|p_i p_j|.$

By the law of cosines, we have
\begin{align*}
|p_\ell p_j|^2 &= |p_i p_j|^2 + |p_i p_\ell|^2 - 2|p_i p_j| |p_i p_\ell| \cos(\sphericalangle (p_{\ell}p_i p_j))\\
&=  |p_i p_j|^2 + |p_i p_\ell|\big( |p_i p_\ell| - 2|p_i p_j|\cos(\sphericalangle (p_{\ell}p_i p_j)) \big)\\
&<  |p_i p_j|^2 + |p_i p_\ell|\left( \frac{11}{10}|p_i p_j| - 2\cos\left(\frac{\pi}{15}\right) |p_ip_j| \right)\\
&< |p_i p_j|^2,
\end{align*}
where the first inequality uses our earlier bounds on $|p_i p_{\ell}|$ and $\sphericalangle (p_{\ell}p_i p_j)$,
and the second uses that $\frac{11}{10}-2\cos(\frac{\pi}{15})<0$. This shows that $|p_i p_j|>|p_j p_{\ell}|$ as claimed earlier.

Because the ordering of the pairwise distances in the matrix~$D$ is the same as for the
Euclidean distances, we can conclude that $D[i,j]>D[j,\ell]$ and $D[k,\ell]>D[i,k]$.
But then we can exchange $p_i p_j$ and $p_k p_{\ell}$
for $p_i p_k$ and $p_j p_{\ell}$ in the tour~$T$---because both edges are oriented
from inside $\sig$ to outside $\sig$ this gives a valid tour---and get a shorter tour.
This contradicts the minimality of the tour, concluding the proof of~(PP1).
\medskip

Property (PP2) is a direct consequence of (PP1). Indeed, if we cover $\sigin$ by $O(1)$ hypercubes
of diameter~$\size(\sigin)/5$, then any segment of length at least $\size(\sigin)/4$ inside $\sigin$
crosses at least one such hypercube, and by (PP1) each hypercube is crossed by $O(1)$ edges of length
at least~$\size(\sigin)/4$.
\end{proof}

\added{Notice that the above properties hold for squared Euclidean distances, or more generally, for any distance matrix $D$ where $D[i,j] = f(|p_ip_j|)$ where $f$ is a monotone increasing function. We get the following corollary.
\begin{corollary}
Let $d$ be a fixed integer and let $f:\Reals \rightarrow \Reals$ be a monotone increasing function. Then for a set of $n$ points $P=\{p_1,\dots,p_n\}$ in $\Reals^d$ and distances $D[i,j]=f(|p_ip_j|)$, the shortest tour of $P$ with respect to $D$ (that is, the minimum length Hamiltonian cycle) can be computed in $2^{O(n^{1-1/d})}$ time.
\end{corollary}
}

\begin{remark}
It would be useful for applications if the algorithm could work with a distance matrix
that is a constant distortion of the Euclidean distances, that is, a matrix~$D$ such that
$(1/\alpha)\cdot |p_i p_j|\leq D[i,j] \leq \alpha|p_i p_j|$ for some constant $\alpha \geq 1$.
Unfortunately, while (PP2) holds also in this scenario, (PP1) does not.
\end{remark}

\section{Concluding remarks}
In this paper we described a new geometric separation technique, which
resulted in a faster exact algorithm for \etsp. Together with the lower bound
in~\cite{BergBKMZ20}, this settles the complexity of \etsp assuming
ETH, and up to the constant in the exponent.

We believe that our separation technique can be useful for other problems in
Euclidean geometry as well, and in particular for problems where one wishes to
compute a minimum-length geometric structure that satisfies the Packing
Property. An example of such a problem is {\sc Rectilinear Steiner Tree}. An
additional issue to overcome here is that the number of potential Steiner
points is $O(n^d)$, which means that a direct application of our techniques
does not work. Another challenging problem is finding the minimum weight
triangulation for a set of $n$ points given in $\mathbb{R}^2$, which was
proven \NP-hard by Mulzer and Rote~\cite{MulzerR08} and for which an
$n^{O(\sqrt{n})}$ algorithm is known~\cite{Lingas98}. A minimum-weight
triangulation does not have the Packing Property, because of clusters of
points that are far from each other, but finding an optimal triangulation
between such clusters can perhaps be handled separately.

\bibliography{highdim}

\end{document}